\pgfplotsset{compat=1.18}
\tikzstyle{cloud} =
\def\cR{\mathcal{R}}
\def\cN{\mathcal{N}}
\def\cA{\mathcal{A}}
\newcommand{\patternedcircle}[1]{ 
	\begin{tikzpicture}
		\draw[thick, color=#1] (0,0) circle (0.1cm);
	\end{tikzpicture}
}
\newcommand{\crossedfilledcircle}[1]{ 
	\begin{tikzpicture}
		\draw[thick, color=#1] (0,0) circle (0.1cm);
   	 \fill[#1] (0,0) circle (0.04cm); 
	\end{tikzpicture}
}
\newcommand{\filledcircle}[1]{ 
	\begin{tikzpicture}
		\fill[#1] (0,0) circle (0.1cm);
	\end{tikzpicture}
}
\newcommand{\blub}{\filledcircle{blue}}
\newcommand{\redr}{\filledcircle{red}}
\newcommand{\trat}{\patternedcircle{black}}
\newcommand{\grag}{\crossedfilledcircle{gray}}
\def\minisat{\textsf{MiniSat}\xspace}
\def\nauty{\textsf{nauty}\xspace}
\def\holfour{\textsf{HOL4}\xspace}
\def\hollight{\textsf{HOL Light}\xspace}
\def\coq{\textsf{Coq}\xspace}
\title{A Formal Proof of R(4,5)=25} 
\titlerunning{A Formal Proof of R(4,5)=25}
\author{Thibault Gauthier}{Czech Technical 
University in Prague, Prague, Czech Republic}
{email@thibaultgauthier.fr}{[https://orcid.org/0000-0002-7348-0602]}{}
\author{Chad E. Brown}{Czech Technical University in Prague, Prague, Czech 
Republic}{}{}{}
\authorrunning{T. Gauthier, Chad E. Brown}
\keywords{Ramsey numbers, SAT solvers, symmetry breaking, generalization, HOL4}
\begin{document}

\maketitle

\begin{abstract}
In 1995, McKay and Radziszowski proved that the 
Ramsey number R(4,5) is equal to 25. Their proof relies on
a combination of high-level arguments and computational steps.
The authors have performed the computational parts of the proof with different 
implementations in order to reduce the possibility of an error in their programs.
In this work, we prove this theorem in the interactive theorem prover HOL4 
limiting the uncertainty to the small HOL4 kernel.
Instead of verifying their algorithms directly, we rely on the HOL4 interface to 
MiniSat to prove 
gluing lemmas.
To reduce the number of such lemmas and thus make the computational part of the 
proof feasible, we implement a generalization algorithm. We verify that its output 
covers all the possible cases by implementing a custom SAT-solver extended with a 
graph isomorphism checker.
\end{abstract}

\section{Introduction}
Formalizations are useful to verify that there are no bugs in some 
software and also that there are no errors in a mathematical proof.
Researchers write their formalizations in an interactive theorem 
prover also called a proof assistant. An interactive theorem prover transforms 
high-level 
proof steps, written
by its user in the language of the interactive theorem prover, 
into low-level proof steps at the level of logical rules and axioms.
These low-level steps are then verified by the kernel of the proof assistant.
Formalizations are thus doubly appropriate when a proof combines 
advanced human-written arguments and computer-generated lemmas. 
This is the case for the four-color theorem~\cite{appel1989every} which was proved 
by 
Appel and Haken in 1976
and the Kepler conjecture~\cite{DBLP:journals/dcg/HalesF06} which was proved by 
Hales and 
Ferguson in 
1998.

In those two cases, a human argument is used to reduce a potentially
infinite number of cases to a finite number. Then, a computer algorithm is used to 
generate a proof for each of these cases. 
The generated proofs are too numerous to be verified manually and so the 
generating code, which is in those cases quite complicated, had to be trusted.
To avoid trusting that the generating code fits together with the human 
argument, a formalization of the four-color 
theorem~\cite{gonthier08ams} was completed in the \coq proof assistant~\cite{coq} 
by Gonthier in 2005 and a formalization of the Kepler 
conjecture~\cite{DBLP:journals/corr/HalesABDHHKMMNNNOPRSTTTUVZ15} was completed 
in the \hollight proof assistant~\cite{harrison09hollight} by a 
team led by Hales in 2014.

The case of $R(4,5) \leq 25$ is different. Since it is a finite problem,
one could prove it by considering a finite number of cases. Since there
are $\frac{25\times24}{2}=300$ edges in a graph with 25 vertices, a naive proof
would consist of checking the presence of a 4-clique or a 5-independent 
set in all graphs of size 25 which would amount to $2^{300}\approx 
10^{90}$ graphs. 
Another approach would be to encode the clique constraints into a SAT 
solver. This reduces the search space dramatically but so far
no proof of $R(4,5) \leq 25$ relying only on calls to SAT solvers has been found.

The proof of McKay and Radziszowski~\cite{DBLP:journals/jgt/McKayR95} first uses a 
high-level argument and then relies on a pre-processing algorithm
to reduce the number of cases to a
manageable number. Each of those cases requires proving that a pair of graph
cannot occur together in an $\cR(4,5,25)$-graph. 
These kinds of problems are called gluing problems. 
Our formalization of $R(4,5)=25$ 
in the \holfour theorem prover~\cite{hol4} 
will mostly follow the initial
splitting argument. We construct a slightly different pre-processing algorithm 
that uses gray edges instead of removing vertices. We also make use of the 
\holfour interface~\cite{DBLP:journals/japll/WeberA09} to the SAT solver 
\minisat~\cite{DBLP:conf/sat/EenS03}, instead of re-using the custom-built
solver of McKay and Radziszowski, to prove that the gluing problems are 
unsatisfiable. 
This greatly simplifies our proof as we do not need to trace
the proof steps of their optimized solver and we do not have to replay those proof 
steps in \holfour. 
Additional differences between our formal proof and
the original proof are discussed in Section~\ref{sec:rel}.
 

We now explain in more detail the different components of our formal proof.
To conclude that $R(4,5) = 25$, we prove that $R(4,5) \leq 25$
and that $R(4,5) > 24$. The statement $R(4,5) > 24$ can simply be 
proved by exhibiting an $\cR(4,5,24)$-graph. The existence of such
graph has been known since 1965 thanks to a construction by 
Kalbfleisch~\cite{kalbfleisch1965construction}.
The formal proof of the existence of an $\cR(4,5,24)$-graph is given in 
Section~\ref{sec:existence}. The other parts 
of this paper describe how to formally prove the more challenging
statement $R(4,5) \leq 25$.
In Section~\ref{sec:prelim}, we first give three important definitions. In 
particular,
we define the Ramsey number $R(4,5)$ which is necessary to state the 
final theorem.
In Section~\ref{sec:deg}, we prove that in an $\cR(4,5,25)$-graph there exists a
vertex of degree $d\in \{8, 10 ,12 \}$ and that the neighbors of
that vertex form an $\cR(3,5,d)$-graph and the antineighbors form an
$\cR(4,4,24-d)$-graph as illustrated in Figure~\ref{fig:split}.
This vertex is referred to 
in other parts of the proofs as the splitting vertex.
In Section~\ref{sec:enum}, we enumerate all possible
$\cR(3,5,d)$-graphs and $\cR(4,4,24-d)$-graphs modulo isomorphism. We then
regroup similar graphs together in what we call generalizations.
In Section~\ref{sec:gluing}, we prove that there is no way to glue
an $\cR(3,5,d)$-generalization and an $\cR(4,4,24-d)$-generalization while 
respecting the clique constraints. This is achieved by encoding the gluing into 
SAT and calling the \holfour interface to \minisat.
In Section~\ref{sec:better}, we improve the construction of generalizations by 
preferring ones resulting in easier gluing problems. This selection is
guided by a simplicity heuristic, which 
estimates how hard the resulting SAT solving problems would be, as described in  
Section~\ref{sec:simplicity}.
In Section~\ref{sec:connect}, we connect the different parts of the proofs
proving that formulas stated at different logical levels (propositional, 
first-order and higher-order) imply each other in the
desired way.

\begin{remark*}
Not every algorithm needs to have its computation steps verified in a formal 
manner. Sometimes, it is enough to verify that the mathematical object 
produced by the algorithm satisfies the desired properties. For example, we did 
not verify 
every step of the \nauty algorithm~\cite{DBLP:journals/jsc/McKayP14} which we rely 
on to normalize graphs  
in Section~\ref{sec:enum}.
Indeed, it is sufficient to save the witness permutations used during 
graph normalization to show that two graphs are isomorphic.
\end{remark*}

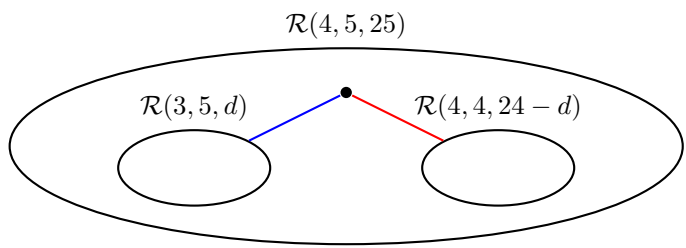
\begin{figure}
	\centering
	\begin{tikzpicture}[scale=1.0, every node/.style={scale=1.0}]
		\node[circle,thick,fill,inner sep=1.5pt] (0) {};
		\node[below of=0] (0b) {};
		\node[cloud,thick,left of=0b, node distance=2cm, minimum height=1cm, 
		minimum 
		width=2cm] 
		(1) {};
		\node[above of=1, yshift=-0.2cm] (1a) {$\cR(3,5,d)$};
		\node[cloud,thick,right of=0b, node distance=2cm, minimum height=1cm, 
		minimum 
		width=2cm] 
		(2) {};
		\node[above of=2, yshift=-0.2cm] (2a) {$\cR(4,4,24-d)$};
		\node[cloud,thick,fit=(0) (1) (2),minimum 
		width=8cm] (3) {};
		\node[above of=3,yshift=0.6cm] (3a) {$\cR(4,5,25)$};
		
		\draw[thick,blue] (0) -- (1);
		\draw[thick,red] (0) -- (2);
		
	\end{tikzpicture}
	\caption{Neighbors (blue-neighbors) and antineighbors (red-neighbors) of a 
		vertex of degree d in an $\cR(4,5,25)$-graph. \label{fig:split}}
\end{figure}

\section{Preliminaries}\label{sec:prelim}

Throughout our proof, we rely on the following definitions.

\begin{definition}(neighbors, antineighbors)\\
Given a graph $(V,E)$:
\begin{itemize}
\item the set of neighbors (blue-neighbors) of a vertex $x\in V$ is
$\lbrace y\in V\ |\ y \not= x \wedge (x,y) \in E \rbrace$,
\item the set of antineighbors (red-neighbors) of a vertex $x\in V$ 
is $\lbrace y\in V\ |\ y \not= x \wedge (x,y) \not\in E \rbrace$.
\end{itemize}
\end{definition}

\begin{definition}(Ramsey property)\\
The Ramsey property $\cR(n,m,k)$ holds for a graph $(V,E)$ if:
\begin{itemize}
\item $V$ has size $k$,
\item $(V,E)$ does not contain a clique (blue-clique) of size n,
\item $(V,E)$ does not contain an independent set (anticlique, red-clique) of 
size m.
\end{itemize}
We also use $\cR(n,m,k)$ to refer to the set of graphs
with property $\cR(n,m,k)$.\\
A graph for which the property $\cR(n,m,k)$ holds is called a 
$\cR(n,m,k)$-graph.
\end{definition}
 
\begin{definition}(Ramsey number)\\
The Ramsey number $R(n,m)$ is the least $k\in \mathbb{N}$ such that 
$\cR(n,m,k)$ is empty.
\end{definition}

In our formalization, a set of vertices $V$ will be represented by a subset of 
nonnegative integers.  Moreover, we often use an equivalent formulation 
of graphs when discussing algorithms on graphs. In the equivalent formulation,
all graphs are complete graphs but their edges are colored either blue or red.
The correspondence between the two formulations
is straightforward. There is a blue edge in the second formulation if 
and only if there is an edge in the first.

\section{Degree Constraints}\label{sec:deg}

As an intermediate concept, we define $R^o(r,s,n)$
to hold if $\cR(r,s,n)$ is empty.
It is easy to see $R(r,s) \leq n$ iff $R^o(r,s,n)$.
In our formalization, we are primarily interested in
proving $R^o(r,s,n)$ for values of $r$, $s$ and $n$.
In this section our focus is on reducing the goal
of proving $R^o(4,5,25)$ to ruling out vertices
of degrees 8, 10 or 12.
All the lemmas presented in this section are
formalized in the file \texttt{basicRamsey} of our repository~\cite{ramsey-github}.
These lemmas are reformulations of basic results
in graph theory~\cite{DBLP:books/daglib/0009415}.

Given a graph $(V,E)$\footnote{We always implicitly assume the set $V$ is finite.}
and a vertex $v\in V$, we write $\cN^{(V,E)}(v)$ for the
set of neighbors of $v$ and $\cA^{(V,E)}(v)$ for the set of antineighbors of $v$.
We will almost always omit the superscript and write $\cN(v)$ and $\cA(v)$.
The degree of $v$ is defined to be the cardinality of $\cN(v)$.
Likewise, the antidegree of $v$ is the cardinality of $\cA(v)$.


Several relevant smaller Ramsey numbers are well-known:
$R(2,s) = s$, $R(3,3) = 6$, $R(3,4) = 9$, $R(3,5) = 14$
and $R(4,4) = 18$.
In our formalization we prove the $R^o$ variant,
only proving the known values are upper bounds.
We begin by sketching a description of these results
as well as some of the preliminary results used to obtain them.

By considering complements of graphs we know
that $R^o(r,s,n)$ implies $R^o(s,r,n)$.
If $(V,E)\in\cR(r+1,s,n)$ and $v\in V$ is a vertex of degree $d$, then
$(\cN(v),E)\in\cR(r,s,d)$.
Likewise, if $(V,E)\in\cR(r,s+1,n)$ and $v\in V$ is a vertex with antidegree $d$, 
then
$(\cA(v),E)\in\cR(r,s,d)$.

Every graph in $\cR(2,s,m)$ has no edges (since an edge would be a 2-clique).
Thus every graph in $\cR(2,s,m)$ is an independent set of size $m$.
This is impossible if $s\leq m$, and so we conclude $R^o(2,m,m)$.
Likewise, $R^o(m,2,m)$.

We next prove a well-known result that provides upper bounds for values of 
$R(r,s)$.
\begin{lemma}\label{lem:ramseysum}
	If $R^o(r+1,s,m+1)$ and $R^o(r,s+1,n+1)$, then $R^o(r+1,s+1,m+n+2)$.
\end{lemma}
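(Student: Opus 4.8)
The statement is the classical Erdős–Szekeres recurrence $R(r+1,s+1) \leq R(r+1,s) + R(r,s+1)$, recast in the $R^o$ formulation, and the plan is to prove its contrapositive: assume $\cR(r+1,s+1,m+n+2)$ is nonempty, pick a graph $(V,E)$ in it, and derive a contradiction by producing either an element of $\cR(r+1,s,m+1)$ or an element of $\cR(r,s+1,n+1)$. Pick any vertex $v \in V$; since $|V| = m+n+2$ and $v \notin \cN(v) \cup \cA(v)$ with $\cN(v)$ and $\cA(v)$ partitioning $V \setminus \{v\}$, we have $|\cN(v)| + |\cA(v)| = m+n+1$. By a pigeonhole argument, either $|\cN(v)| \geq m+1$ or $|\cA(v)| \geq n+1$.

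Consider the first case, $|\cN(v)| \geq m+1$. Take a subset $W \subseteq \cN(v)$ with $|W| = m+1$. I claim $(W,E) \in \cR(r+1,s,m+1)$, contradicting the hypothesis $R^o(r+1,s,m+1)$. The size condition is immediate. For the clique condition: $(W,E)$ has no blue-clique of size $r+1$, because otherwise — since every vertex of $W$ is a neighbor of $v$, i.e. joined to $v$ by a blue edge — adjoining $v$ to such a clique yields a blue-clique of size $r+2$ in $(V,E)$, contradicting $(V,E)\in\cR(r+1,s+1,m+n+2)$. Here I use that cliques restrict to subgraphs and that the edge relation is inherited. For the anticlique condition: $(W,E)$ has no red-clique of size $s$, because any such would already be a red-clique of size $s$ in $(V,E)$, which is allowed there — wait, $\cR(r+1,s+1,\cdot)$ forbids red-cliques of size $s+1$, not $s$. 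So this direction is not quite symmetric; instead I should observe that a red-clique of size $s$ in $(W,E)$ is a red-clique of size $s$ in $(\cN(v),E)$, and by the earlier observation in the excerpt, $(\cN(v),E) \in \cR(r,s+1,|\cN(v)|)$, hence has no red-clique of size $s+1$; but we need to rule out size $s$. The cleaner route: use the excerpt's remark directly — if $(V,E)\in\cR(r+1,s,n)$ and $v$ has degree $d$ then $(\cN(v),E)\in\cR(r,s,d)$ — except our graph lies in $\cR(r+1,s+1,\cdot)$, so $(\cN(v),E)\in\cR(r,s+1,|\cN(v)|)$, which still only bounds red-cliques by $s+1$. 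So the argument must instead go: a red-clique of size $s$ in $W \subseteq \cN(v)$ together with the fact that $v$ is red-joined to nothing in $\cN(v)$ does not help. The correct finish is simply that $(W,E)$ inheriting from $(V,E)$ has no red $(s+1)$-clique, and we shrink to get $\cR(r+1,s,m+1)$ only if we also forbid red $s$-cliques — which we cannot in general.

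Reconsidering: the right split is to take $W \subseteq \cN(v)$ of size $m+1$ and argue $(W,E) \in \cR(r+1,s,m+1)$ using that (i) no blue $(r+1)$-clique, since $v$ plus such would give blue $(r+2)$-clique in $(V,E)$, and (ii) no red $s$-clique, since such would be a red $s$-clique inside $\cN(v)$, and combined with the fact that $\cN(v)$ has no red... no. The actual classical argument keeps $v$ for the \emph{red} side: if $|\cA(v)| \geq n+1$, take $W \subseteq \cA(v)$ of size $n+1$; then $(W,E)$ has no red $(s+1)$-clique (adjoin $v$, red-joined to all of $\cA(v)$, to get red $(s+2)$-clique — wait we need $\cR(r,s+1,n+1)$ which forbids red $(s+1)$), so actually: no red $(s+1)$-clique in $W$ would be needed, and a red $s$-clique... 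Let me restate cleanly. Forbid-blue on the neighbor side costs one: in $W\subseteq\cN(v)$, no blue $r$-clique would lift via $v$ to a blue $(r+1)$-clique; but $\cR(r+1,s+1,\cdot)$ only forbids blue $(r+2)$. Hmm — so blue on the neighbor side: $W$ has no blue $(r+1)$-clique (needed for $\cR(r+1,s,m+1)$), and indeed if it had a blue $(r+1)$-clique, adding $v$ gives blue $(r+2)$-clique, contradiction. Good, that works with the $+1$ coming from $v$. And $W$ has no red $s$-clique (needed for $\cR(r+1,s,m+1)$): if it did, that red $s$-clique lies in $V$, which is fine — so this fails. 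The resolution, finally and correctly: on the neighbor side use $\cR(r,s+1,m+1)$? No. I will structure the proof as: pigeonhole gives $|\cN(v)|\geq m+1$ or $|\cA(v)|\geq n+1$; in the first subcase, $(\cN(v),E)$ restricted to an $(m+1)$-subset $W$ satisfies $\cR(r,s+1,m+1)$... this is getting muddled, so the honest plan is: reduce to the two subcases, and in each subcase combine the standard "$v$ is blue/red joined to all of $\cN(v)$/$\cA(v)$" observation (which bumps a blue/red clique bound by one) with the inherited clique/anticlique bounds; the bookkeeping of which index gains the $+1$ is the only delicate point, and the hypotheses $R^o(r+1,s,m+1)$ and $R^o(r,s+1,n+1)$ are chosen exactly so that the neighbor side (blue-heavy, so $r$ bumps to $r+1$, $s$ stays) matches the first and the antineighbor side (red-heavy, so $s$ bumps to $s+1$, $r$ stays) matches the second.

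The \textbf{main obstacle} in the formalization is purely bookkeeping: carefully tracking that a clique of size $k$ inside a subset remains a clique of size $k$ in the ambient graph, that adjoining $v$ to a clique contained in $\cN(v)$ genuinely produces a clique one larger (requiring $v \notin \cN(v)$ and all the new edges being present), and handling the pigeonhole step $|\cN(v)| + |\cA(v)| = m+n+1 \Rightarrow |\cN(v)| \geq m+1 \lor |\cA(v)| \geq n+1$ together with the "pass to a subset of exactly the right size" step, for which one needs that $\cR(r',s',k')$ being empty implies $\cR(r',s',k'')$-graphs with $k'' \geq k'$ can be cut down — or more directly, that a graph with an $(m+1)$-subset inducing a $\cR(r+1,s,m+1)$-graph contradicts $R^o(r+1,s,m+1)$. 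I would isolate these as small reusable lemmas (monotonicity of the clique property under induced subgraphs, and the "extend a clique by a universal vertex" lemma) and then the main proof is a short case split.
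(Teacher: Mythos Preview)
Your approach is the same as the paper's --- the classical Erd\H{o}s--Szekeres argument via a vertex split --- but you have the bookkeeping backwards, and that is exactly why your write-up kept collapsing. On the neighbor side, the blue bound \emph{drops} by one (a blue $r$-clique in $\cN(v)$ extends by $v$ to a blue $(r{+}1)$-clique in $V$), while the red bound is inherited unchanged; hence $(\cN(v),E)\in\cR(r,s{+}1,d)$. This matches the hypothesis $R^o(r,s{+}1,n{+}1)$, not $R^o(r{+}1,s,m{+}1)$ as you repeatedly tried. Dually, $(\cA(v),E)\in\cR(r{+}1,s,d')$ matches $R^o(r{+}1,s,m{+}1)$. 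So the correct pigeonhole split is $|\cN(v)|\geq n{+}1$ or $|\cA(v)|\geq m{+}1$; with that swap each case goes through in one line, and your closing summary (``neighbor side \ldots\ matches the first'') should be reversed.

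The paper phrases the same argument more compactly by avoiding the case split: from $(\cN(v),E)\in\cR(r,s{+}1,d)$ and $R^o(r,s{+}1,n{+}1)$ one gets $d\leq n$, and from $(\cA(v),E)\in\cR(r{+}1,s,d')$ and $R^o(r{+}1,s,m{+}1)$ one gets $d'\leq m$; then $d+d'\leq m+n<m+n+1$ contradicts $d+d'=m+n+1$. This is equivalent to your pigeonhole version but cleaner to formalize, since it uses the already-stated lemmas about $\cN(v)$ and $\cA(v)$ directly rather than re-deriving them inside each case.
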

\begin{proof} Assume we have a graph $(V,E)$
	in $\cR(r+1,s+1,m+n+2)$. We choose a vertex $v\in V$ with degree $d$ and 
	antidegree $d'$.
	We know $(\cN(v),E)\in \cR(r,s+1,d)$
	and $(\cA(v),E)\in \cR(r+1,s,d')$.
	We obtain a contradiction using
	$d + d' = m+n+1$, $d < m+1$ (since $R^o(r+1,s,m+1)$) and $d' < n+1$ (since 
	$R^o(r,s+1,n+1)$).
\end{proof}

Applying the previous results, we immediately obtain $R^o(3,3,6)$.
We also obtain $R^o(3,4,10)$, but need the stronger result $R^o(3,4,9)$.

There is an easy informal argument for why $\cR(3,4,9)$ is empty.
Assume $(V,E)$ is a graph in $\cR(3,4,9)$.
The results above ensure every vertex $v\in V$ must have degree $d < 4$
(since $R^o(2,4,4)$) and antidegree $d' < 6$ (since $R^o(3,3,6)$).
Since $d+d'=8$, we must have $d=3$ and $d'=5$.
We now consider the sum of the degrees of each vertex.
Since the relation is symmetric, the sum must be even, as each edge
is counted as part of the degree of each of the vertices of the edge.
However, the sum is also $9\cdot 3 = 27$, which is odd.
Hence no such graph exists.
Below we describe our formalization of general results
allowing us to prove $R^o(3,4,9)$. The results will also
allow us to later prove every graph in $\cR(4,5,25)$ must
have a vertex with even degree.

\begin{lemma}\label{lem:odddegrsum} Let $(V,E)$ be a graph in which every vertex 
has odd degree.
	For each $U\subseteq V$, $U$ has odd cardinality if and only if $\Sigma_{u\in 
	U}|\cN(u)|$ is odd.
\end{lemma}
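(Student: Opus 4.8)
The plan is to reduce the biconditional to the elementary fact that a sum of odd integers is odd exactly when the number of summands is odd, and to establish that fact by induction on the finite set $U$.

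First I would unfold the hypothesis: "every vertex has odd degree" says precisely that $|\cN(u)|$ is odd for every $u\in V$, hence in particular for every $u\in U$. Reasoning modulo $2$, each summand satisfies $|\cN(u)|\equiv 1$, so
$\sum_{u\in U}|\cN(u)| \equiv \sum_{u\in U} 1 = |U| \pmod 2$,
and therefore $\sum_{u\in U}|\cN(u)|$ is odd if and only if $|U|$ is odd, which is exactly the claim.

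To turn this into a formal argument in \holfour (rather than manipulating arithmetic modulo $2$ on a potentially large sum directly), I would do induction on the finite set $U$. In the base case $U=\emptyset$ the sum is $0$, which is even, and $|\emptyset|=0$ is even, so both sides of the biconditional are false. For the inductive step, write $U = U'\cup\{u_0\}$ with $u_0\notin U'$ and assume the statement for $U'$; then $\sum_{u\in U}|\cN(u)| = |\cN(u_0)| + \sum_{u\in U'}|\cN(u)|$ and $|U| = |U'|+1$. Since $|\cN(u_0)|$ is odd, adding it flips the parity of $\sum_{u\in U'}|\cN(u)|$, and adding $1$ flips the parity of $|U'|$; hence the equivalence "$|U|$ odd $\iff$ $\sum_{u\in U}|\cN(u)|$ odd" is inherited from the same equivalence for $U'$.

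I do not expect a genuine obstacle here: the mathematical content is just the parity of a sum of odd numbers, and the only thing requiring care is the routine finite-set bookkeeping — well-definedness of the sum over a finite set and the standard "remove one element" recursion used to set up the induction — which is standard in \holfour.
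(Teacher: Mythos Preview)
Your proposal is correct and is exactly the approach taken in the paper, which simply states that the proof follows by induction over the finite set $U$. You have merely spelled out the base case and inductive step of that induction in more detail than the paper does.
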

\begin{proof} The proof follows by an induction over the finite set $U$.
\end{proof}
Applying Lemma~\ref{lem:odddegrsum} with $U=V$, we obtain that if $V$ has odd
cardinality and every vertex has odd degree, then $\Sigma_{v\in V}|\cN(v)|$ is odd.
In particular for a hypothetical graph $(V,E)\in\cR(3,4,9)$,
$\Sigma_{v\in V}|\cN(v)|$ is odd since $9$ and $3$ are odd.

On the other hand we can prove $\Sigma_{v\in V}|\cN(v)|$ is always even,
though this requires two inductions on finite sets.
We first prove that if we extend a graph with a new vertex,
the neighbors of the new vertex in the larger graph contribute twice
to the sum.
\begin{lemma}\label{lem:sumdegrevenlem12} Let $V$ be a finite set, $u\notin V$ and
	$E$ be a symmetric relation (on $V\cup\{u\}$).\footnote{In the formalization, 
	$E$ is assumed to be symmetric on the relevant type, ignoring $V\cup\{u\}$.}
	For every finite set $U$, if $\cN^{(V\cup\{u\},E)}(u) = U$, then
	$$\Sigma_{w\in V\cup\{u\}} |\cN^{(V\cup\{u\},E)}(w)|
	= \Sigma_{v\in V} |\cN^{(V,E)}(v)| + 2|U|.$$
\end{lemma}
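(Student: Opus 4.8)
The plan is to peel the new vertex $u$ off the sum and then account, one summand at a time, for how the neighbor sets of the old vertices change. First, since $V$ is finite and $u\notin V$, the set $V\cup\{u\}$ is finite and we may split the sum as
$$\Sigma_{w\in V\cup\{u\}}|\cN^{(V\cup\{u\},E)}(w)| \;=\; |\cN^{(V\cup\{u\},E)}(u)| + \Sigma_{w\in V}|\cN^{(V\cup\{u\},E)}(w)| \;=\; |U| + \Sigma_{w\in V}|\cN^{(V\cup\{u\},E)}(w)|,$$
using the hypothesis $\cN^{(V\cup\{u\},E)}(u)=U$. It then remains to prove $\Sigma_{w\in V}|\cN^{(V\cup\{u\},E)}(w)| = \Sigma_{v\in V}|\cN^{(V,E)}(v)| + |U|$.

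Next I would compare, for a fixed $w\in V$, the set $\cN^{(V\cup\{u\},E)}(w)$ with $\cN^{(V,E)}(w)$. Since $w\neq u$, the only vertex that can be a neighbor of $w$ in the larger graph but not the smaller one is $u$ itself, so $\cN^{(V\cup\{u\},E)}(w) = \cN^{(V,E)}(w)\cup\big(\{u\}\cap\cN^{(V\cup\{u\},E)}(w)\big)$, a disjoint union because $\cN^{(V,E)}(w)\subseteq V$ and $u\notin V$. Moreover $u\in\cN^{(V\cup\{u\},E)}(w)$ iff $(w,u)\in E$ iff $(u,w)\in E$ (by symmetry of $E$) iff $w\in\cN^{(V\cup\{u\},E)}(u)=U$. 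Hence $|\cN^{(V\cup\{u\},E)}(w)|$ equals $|\cN^{(V,E)}(w)|$ when $w\notin U$ and $|\cN^{(V,E)}(w)|+1$ when $w\in U$. Also, every neighbor of $u$ lies in $V$, so $U\subseteq V$.

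The remaining step is to sum this pointwise identity over $w\in V$. This is the promised induction over a finite set: by a routine induction on a finite set $S$ one proves $\Sigma_{w\in S}\big(g(w)+c(w)\big) = \Sigma_{w\in S}g(w) + |S\cap U|$ for any function $g$ and the indicator $c$ of an arbitrary set $U$ (empty case trivial; the insertion step adds $g(w_0)+c(w_0)$ to both sides). Instantiating $S:=V$, $g(w):=|\cN^{(V,E)}(w)|$ and using $U\subseteq V$ (so $|V\cap U|=|U|$) yields $\Sigma_{w\in V}|\cN^{(V\cup\{u\},E)}(w)| = \Sigma_{v\in V}|\cN^{(V,E)}(v)| + |U|$, and combining with the split above gives $|U|+\Sigma_{v\in V}|\cN^{(V,E)}(v)|+|U|$, as required.

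I expect the main obstacle to be not the mathematics but the finite-set bookkeeping in \holfour: one must carefully justify that $\cN^{(V,E)}(v)$, computed inside $V$, is unaffected by edges of $E$ incident to $u$, that the sum over $V\cup\{u\}$ really does split off the $u$-summand, and that $U$ is finite and contained in $V$ — each an easy fact that nonetheless needs its own small induction or library lemma. An alternative organization, matching the way the statement quantifies over $U$, is to induct directly on $U$: remove a neighbor $a\in U$ of $u$ from $E$, so that $\cN^{(V\cup\{u\},E')}(u)=U\setminus\{a\}$ while the neighbor sets inside $V$ are unchanged, apply the induction hypothesis to $E'$, and observe that restoring the edge $\{u,a\}$ increases exactly the two summands for $u$ and for $a$ by $1$, hence both sides by $2$.
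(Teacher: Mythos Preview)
Your argument is correct. Your main route (peel off $u$, then sum the indicator $[w\in U]$ over $V$ via an auxiliary finite-set induction) is a slightly different decomposition from the paper's, which simply says the lemma ``is proved by induction on the finite set $U$'' --- i.e.\ the organization you sketch at the very end as an \emph{alternative}: strip one neighbor $a$ of $u$ from $E$, apply the induction hypothesis with $U\setminus\{a\}$, and note that restoring the edge $\{u,a\}$ bumps exactly the two summands for $u$ and for $a$ by $1$. Your primary approach is arguably cleaner mathematically (it isolates the contribution $2|U|$ as $|U|$ from $u$'s own degree plus $|U|$ from the indicator sum), whereas the paper's induction on $U$ tracks the statement's outermost quantifier and avoids having to introduce and reason about the modified neighbor sets and the containment $U\subseteq V$ as separate facts, which is likely why it was chosen for the formalization.
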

\begin{proof}
	This is proved by induction on the finite set $U$.
\end{proof}
We can now prove the sum is even by induction on the finite set of vertices $V$.
\begin{lemma}\label{lem:sumdegreven} For every finite set $V$ and symmetric 
relation $E$,
	$\Sigma_{v\in V} |\cN^{(V,E)}(v)|$ is even.
\end{lemma}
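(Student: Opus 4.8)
The plan is to prove this by induction on the finite set $V$, with Lemma~\ref{lem:sumdegrevenlem12} doing the work in the inductive step. Since the statement quantifies over all symmetric relations $E$, I would state the induction hypothesis with $E$ universally quantified, so that it can be reused on the smaller vertex set with the same (or a suitably restricted) edge relation.

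For the base case $V = \emptyset$, the sum $\Sigma_{v\in\emptyset}|\cN^{(\emptyset,E)}(v)|$ is $0$, which is even. For the inductive step, suppose $V = V' \cup \{u\}$ with $u \notin V'$, and assume the claim holds for $V'$ (and every symmetric $E$). Let $U = \cN^{(V'\cup\{u\},E)}(u)$ be the neighbor set of $u$ in the full graph. Applying Lemma~\ref{lem:sumdegrevenlem12} with this $V'$, $u$, $E$, and $U$ (whose hypothesis $\cN^{(V'\cup\{u\},E)}(u) = U$ holds by construction), we obtain
$$\Sigma_{w\in V'\cup\{u\}} |\cN^{(V'\cup\{u\},E)}(w)| = \Sigma_{v\in V'} |\cN^{(V',E)}(v)| + 2|U|.$$
The first summand on the right is even by the induction hypothesis, and $2|U|$ is visibly even, so their sum is even. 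This is exactly $\Sigma_{v\in V}|\cN^{(V,E)}(v)|$, completing the induction.

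The only point requiring care is lining up the instance of Lemma~\ref{lem:sumdegrevenlem12}: one must pick $u$ as the element removed by the finite-set induction principle and instantiate $U$ with the actual neighbor set of $u$, rather than an arbitrary set, so that the side condition $\cN(u) = U$ is discharged trivially. I expect this bookkeeping — and ensuring the induction hypothesis is phrased generally enough in $E$ to be applied to the restricted graph — to be the main (and only mild) obstacle; the arithmetic that ``even plus even is even'' is immediate.
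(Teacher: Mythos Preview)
Your proposal is correct and matches the paper's approach: the paper also proves this by induction on the finite set $V$, using Lemma~\ref{lem:sumdegrevenlem12} for the inductive step exactly as you describe. Your write-up simply fills in the details the paper leaves implicit.
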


With Lemmas~\ref{lem:odddegrsum} and~\ref{lem:sumdegreven}
we can conclude $R^o(3,4,9)$ since the sum of the degrees
of the vertices in a hypothetical graph $(V,E)\in\cR(3,4,9)$
would be both odd and even.

Using Lemma~\ref{lem:ramseysum} we now immediately obtain
$R^o(3,5,14)$ and $R^o(4,4,18)$, giving us all the upper bounds
for small Ramsey numbers we will need.


We now turn to the consideration of $R^o(4,5,25)$.
For the next steps in the proof, we assume for the sake of contradiction that 
there exists a graph $(V,E) \in \cR(4,5,25)$.
Let $v\in V$ with degree $d$ and antidegree $d'$ be given.
Since $(\cN(v),E)\in\cR(3,5,d)$
and $(\cA(v),E)\in\cR(4,4,d')$
we know $d < 14$ and $d' < 18$.
Since $d+d'=24$, we must have $d > 6$.
This provides our basic upper and lower bounds
on degrees of vertices in $(V,E)$.

These same degree bounds are, of course, given 
in~\cite{DBLP:journals/jgt/McKayR95}.
The argument in~\cite{DBLP:journals/jgt/McKayR95} considers graphs in
$\cR(3,5,d)$ and corresponding graphs in
$\cR(4,4,24-d)$ that could hypothetically correspond to $\cN(v)$
and $\cA(v)$ for a vertex $v\in V$.
In \cite{DBLP:journals/jgt/McKayR95}, the case with $d=11$ is ruled out since if 
every vertex had degree $11$, the sum of degrees would be odd, giving
a contradiction.
That is, we can be assured of the existence of a vertex
$v\in V$ with degree $d\in\{7,8,9,10,12,13\}$.
In our proof, we apply Lemmas~\ref{lem:odddegrsum} and~\ref{lem:sumdegreven}
more generally to conclude that there must be a vertex
$v\in V$ of even degree.
Thus, we can be assured there is a $v\in V$ with degree $d\in\{8,10,12\}$.

\section{Enumeration of Graphs and Construction of Covers}~\label{sec:enum}

Assuming that there exists a graph $(V,E) \in \cR(4,5,25)$, there 
must exist a vertex $v \in V$ of degree $d \in 
\lbrace 8,10,12\rbrace$ as proven in Section~\ref{sec:deg}. Thus, if we prove that
for all $d \in \lbrace 8,10,12 \rbrace$ and for all pair of graphs $G\in 
\cR(3,5,d)$ and $H\in \cR(4,4,24-d)$, there is no way to color 
edges connecting $G$ and $H$ without creating a 4-blue or a 5-red clique, then we 
would have proved that $R^o(4,5,25)$ (i.e. $R(4,5) \leq 25$).

Here is a simple approach.
First, enumerate all the graphs in 
$\cR(3,5,d)$ and in 
$\cR(4,4,24-d)$, and then prove the absence of gluing between each pair of graphs 
(see Section~\ref{sec:gluing}).
This is however not efficient enough given our computational means.
In Table~\ref{tab:reduction}, we estimated that this approach would 
take more than 16,000 CPU days. 
To save time in both algorithms, we regroup graphs that are similar
to each other, differing only by a few edges, in what we call 
\textit{generalizations}. 
This way, our proofs will avoid repeating the same 
arguments in similar 
situations. This idea reduces, with the help of a simplicity heuristic, the total 
computation time to less than 950 CPU days as shown
in Table~\ref{tab:reduction}.

From a set of graphs $\mathcal{G}$, we will construct a set of generalizations 
$\mathcal{G}^*$ (this is a set of set of graphs) with the following properties. 
Every graph in $\mathcal{G}$ is a member of a 
generalization in 
$\mathcal{G}^*$ (we are not missing any case) and every graph in a generalization 
$G^* 
\in \mathcal{G}^*$ is in $\mathcal{G}$ (we are not covering extra cases).

\begin{definition}(cover,exact cover)\\
A set of generalizations $\mathcal{G}^*$ is a cover of a set of graphs 
$\mathcal{G}$ if 
$\mathcal{G} \subseteq 
\bigcup_{G^*\in \mathcal{G}^*}  G^*$.\\
A set of generalizations $\mathcal{G}^*$ is an exact cover of
a set of graphs $\mathcal{G}$ if 
$\mathcal{G} = 
\bigcup_{G^*\in \mathcal{G}^*}G^*$.
\end{definition}

In a cover the generalizations do not need to be disjoint.
Furthermore, our proof does not fundamentally require the constructed 
covers to be exact and better covers may be obtained
by dropping this requirement. Yet, having exact covers
simplify our presentation of the gluing algorithm as it enable us
to ignore all clique constraints containing the splitting vertex 
(see Section~\ref{sec:gluing}).

\subsection{Algorithm for Constructing an Exact Cover}
In the following, we describe our base algorithm for constructing an
exact cover for a set of graphs $\mathcal{G}$. Our algorithm differs from
the one given in \cite{DBLP:journals/jgt/McKayR95} where they decide on which 
vertex to remove from a
graph. This is equivalent in our algorithm to ignoring the color of all edges 
connecting to that vertex. In 
contrast, our approach is more targeted and can decide whether to ignore the color
of an edge individually. Creating such alternative approach was crucial for us.
Indeed, following the original vertex removal method resulted in the creation of 
SAT problems, which were difficult to reconstruct in \holfour due to 
memory issues, negating most of the advantage gained by regrouping graphs.

 This will be
achieved by incrementally growing a set of generalizations 
$\mathcal{G}^*_\mathit{partial}$. We refer to the set of graphs $G\in\mathcal{G}$ 
that are not currently covered by $\mathcal{G}^*_\mathit{partial}$ as 
$\mathcal{G}_\mathit{uncovered}$.
Initially, $\mathcal{G}^*_\mathit{partial}$ is empty and thus 
$\mathcal{G}_\mathit{uncovered}$ is equal to $\mathcal{G}$.

At each iteration of our algorithm, we randomly pick a graph $G$ from 
$\mathcal{G}_\mathit{uncovered}$ constructing a singleton generalization 
$G^*_0 = \lbrace G \rbrace$.
Then, we color one of the edges of $G$ gray. This represents a generalization 
$G^*_1$ that contains the two graphs obtained by coloring the gray edge 
red or blue. Note that one of this graph is $G$ and thus $G \in G^*_1$ and $G^*_0
\subseteq G^*_1$. 
In general, the process starts from a generalization $G^*_n$ represented
as a graph with $n$ gray edges.
By definition, the generalization $G^*_n$ is defined to be the 
set of all graphs that can be obtained by coloring its $n$ gray edges red or blue 
in its representation.
Then, the algorithm selects randomly one edge to gray among edges respecting
the following conditions: the produced generalization $G_{n+1}^*$ must only contain
graphs that are in $\mathcal{G}$ and $(G_{n+1}^*\setminus G_{n})\ \cap\ 
\mathcal{G}_\mathit{uncovered}$ must contain at 
least $\lceil 2^{n-3} \rceil$ graphs. The first condition makes the cover
exact and the second condition prevents large overlaps between generalizations. 
The 
coefficient
$\lceil 2^{n-3} \rceil$ was experimentally determined and essentially
ensures that at least $\frac{1}{8}$ of the covered graphs by the newly created 
generalization are not covered by previous generalizations.

This process is repeated graying one more edge per generalization step, 
as illustrated in Figure~\ref{fig:gen}. It stops when the number of 
gray edges exceeds a user-given limit or when there are no more edges respecting 
the conditions. 

When the generalization algorithm stops, it creates a maximal generalization 
$G^*_{\mathit{max}}$ which is added to the
set of generalizations $\mathcal{G}^*_\mathit{partial}$ and the instantiations
of $G^*_{\mathit{max}}$ are removed from the set $\mathcal{G}_\mathit{uncovered}$
We keep adding new generalizations to $\mathcal{G}^*_\mathit{partial}$ by the
same procedure until the set $\mathcal{G}_\mathit{uncovered}$ is empty and 
therefore $\mathcal{G}^*_\mathit{partial}$ is an exact cover of $\mathcal{G}$.

We can reduce the size of the final cover $\mathcal{G}^*$ by sampling multiple 
graphs in $\mathcal{G}_\mathit{uncovered}$ at each iteration of the algorithm.
In our implementation, we sample 1000 graphs when $\mathcal{G}=\cR(4,4,k)$ 
and all graphs when $\mathcal{G}=\cR(3,5,k)$.
This produces one maximal generalization for each of those graphs.
We then select among them a generalization $G^*$ that contains 
a maximum number of uncovered graphs. That is to say one for which $
|G^* \cap \mathcal{G}_\mathit{uncovered}|$ is maximum. We call
this strategy for selecting generalization the \textit{greedy cover} strategy.
Our final strategy for constructing covers, described in Section~\ref{sec:better},
is a blend of the \textit{greedy cover} strategy and a strategy 
that minimizes the difficulty of resulting problems with respect to a simplicity 
heuristic given in Section~\ref{sec:simplicity}.

The \nauty algorithm~\cite{DBLP:journals/jsc/McKayP14} is called to normalize 
graphs in $\mathcal{G}$ and in each generalization $G^*$. By normalizing all 
graphs,
we can check that two graphs are isomorphic by simply checking if their 
normalizations are equal.

\begin{table}[t]
	\centering
	\begin{tabular}{rrrrr}\toprule
		k & $\cR(3,5,k)$ & $\cR^*(3,5,k)$ & 
		$\cR(4,4,k)$ & $\cR^*(4,4,k)$\\ 
		\midrule
		1 & 1 &  & 1 &  \\
		2 & 2 &  & 2 &  \\
		3 & 3 &  & 4 &  \\
		4 & 7 &  & 9 &  1\\
		5 & 13 & 3 & 24 &  3\\
		6 & 32 & 3 & 84 &  6\\
		7 & 71 & 5 & 362 &  11\\
		8 & \textbf{179} & \textbf{27} & 2079 &  47\\
		9 & 290 & 11 & 14701 & 271\\
		10 & \textbf{313} & \textbf{43} & 103706 & 1669 \\
		11 & 105 & 12 & 546356 & 7919\\
		12 & \textbf{12} & \textbf{12} & \textbf{1449166} & \textbf{26845} \\
		13 & 1 & 1 & 1184231 & 13078 \\
		14 &   &  & \textbf{130816} & \textbf{11752} \\
		15 &   &  & 640 & 67 \\
		16 &   &  & \textbf{2} & \textbf{2} \\
		17 &   &  & 1 & 1 \\
		\bottomrule
	\end{tabular}
	\caption{Number of $\cR(3,5,k)$-graphs and 
		$\cR(4,4,k)$-graphs up-to-isomorphism together with the number of 
		generalizations in the respective covers. All the covers were initially 
		constructed with a maximum of 10 gray edges.
		We later updated the cover for the bold cases using an
		edge selection algorithm and an improved selection algorithm for 
		generalizations (see Section~\ref{sec:better}). \label{tab:cover}
	    }
\end{table}
 
Computing the lists of $\cR(3,5,k)$-graphs and
$\cR(4,4,k)$-graphs up-to-isomorphism can be done efficiently by simply 
repeatedly extending 
graphs in $\cR(3,5,k)$ (resp.\ $\cR(4,4,k)$) by one vertex
while respecting the clique constraints to produce $\cR(3,5,k+1)$
(resp.\ $\cR(4,4,k+1)$). Such lists have been repeatedly compiled
as mentioned in~\cite{DBLP:journals/jgt/McKayR95} and therefore we will
not discuss in more detail how to construct them.
From those lists, we construct corresponding covers $\cR^*(3,5,k)$
and $\cR^*(4,4,k)$. The size of those constructed covers for the 
sets of graphs $\cR(3,5,k)$ with $5 \leq k \leq 13$ and 
the sets of graphs $\cR(4,4,k)$ with $4 \leq k \leq 17$ is presented
in Table~\ref{tab:cover}.

\begin{figure}
	
Iteration 0:\ $\mathcal{G}_\mathit{uncovered} = \mathcal{G}= \lbrace$
\begin{tikzpicture}[scale=0.35]
\draw[blue,thick] (0,0) -- (1,0);
\draw[blue,thick] (1,0) -- (0.5,0.866);
\draw[red,thick] (0.5,0.866) -- (0,0);
\end{tikzpicture}
,
\begin{tikzpicture}[scale=0.35]
	\draw[blue,thick] (0,0) -- (1,0);
	\draw[red,thick] (1,0) -- (0.5,0.866);
	\draw[red,thick] (0.5,0.866) -- (0,0);
\end{tikzpicture}
,
\begin{tikzpicture}[scale=0.35]
	\draw[red,thick] (0,0) -- (1,0);
	\draw[red,thick] (1,0) -- (0.5,0.866);
	\draw[red,thick] (0.5,0.866) -- (0,0);
\end{tikzpicture}
$\rbrace,$ \hspace{2mm} $\mathcal{G}^*_\mathit{partial} =\emptyset$\\

Randomly chosen generalization $G_0^*=\lbrace $ \begin{tikzpicture}[scale=0.35]
	\draw[blue,thick] (0,0) -- (1,0);
	\draw[red,thick] (1,0) -- (0.5,0.866);
	\draw[red,thick] (0.5,0.866) -- (0,0);
\end{tikzpicture} $\rbrace$

\vspace{3mm}

\begin{tikzpicture}
	\tikzset{
    node1/.style={
	shape=regular polygon,
	regular polygon sides=3,
	minimum size=1cm,
	inner sep=0,
	outer sep=0,
    path picture={
	\draw[red,line width=1mm] (path picture bounding box.south west) -- (path 
	picture 
	bounding box.north);
	\draw[red,line width=1mm] (path picture bounding box.north) -- (path picture 
	bounding 
	box.south east);
	\draw[blue,line width=1mm] (path picture bounding box.south east) -- (path 
	picture 
	bounding box.south west);
    },
			fill=none,
			draw=none
		}
    ,
        node4/.style={
    	shape=regular polygon,
    	regular polygon sides=3,
    	minimum size=1cm,
    	inner sep=0,
    	outer sep=0,
    	path picture={
    		\draw[blue,line width=1mm] (path picture bounding box.south west) -- 
    		(path 
    		picture 
    		bounding box.north);
    		\draw[red,line width=1mm] (path picture bounding box.north) -- (path 
    		picture 
    		bounding 
    		box.south east);
    		\draw[blue,line width=1mm] (path picture bounding box.south east) -- 
    		(path 
    		picture 
    		bounding box.south west);
    	},
    	fill=none,
    	draw=none
    }
    ,
        node5/.style={
    	shape=regular polygon,
    	regular polygon sides=3,
    	minimum size=1cm,
    	inner sep=0,
    	outer sep=0,
    	path picture={
    		\draw[red,line width=1mm] (path picture bounding box.south west) -- 
    		(path 
    		picture 
    		bounding box.north);
    		\draw[red,line width=1mm] (path picture bounding box.north) -- (path 
    		picture 
    		bounding 
    		box.south east);
    		\draw[red,line width=1mm] (path picture bounding box.south east) -- 
    		(path 
    		picture 
    		bounding box.south west);
    	},
    	fill=none,
    	draw=none
    }
    ,
	node2/.style={
		shape=regular polygon,
		regular polygon sides=3,
		minimum size=1cm,
		inner sep=0,
		outer sep=0,
		path picture={
			\draw[gray,line width=1mm] (path picture bounding box.south 
			west) -- 
			(path 
			picture 
			bounding box.north);
			\draw[red,line width=1mm] (path picture bounding box.north) -- (path 
			picture 
			bounding 
			box.south east);
			\draw[blue,line width=1mm] (path picture bounding box.south east) -- 
			(path 
			picture 
			bounding box.south west);
		},
		fill=none,
		draw=none
	}
    ,
node3/.style={
	shape=regular polygon,
	regular polygon sides=3,
	minimum size=1cm,
	inner sep=0,
	outer sep=0,
	path picture={
		\draw[gray,line width=1mm] (path picture bounding box.south 
		west) -- 
		(path 
		picture 
		bounding box.north);
		\draw[red,line width=1mm] (path picture bounding box.north) -- 
		(path 
		picture 
		bounding 
		box.south east);
		\draw[gray,line width=1mm] 
		(path picture bounding box.south east) -- 
		(path 
		picture 
		bounding box.south west);
	},
	fill=none,
	draw=none
	}
    }	
	
	\node[node1] (l0) {$G_0^*$};
	\node[left of=l0, node distance=2.5cm] (l0l) {\textbf{Generalizations}}
	node[node2, node distance =3cm, right of=l0] (l1){$G_1^*$}; 
	\node[node3, node distance =3cm, right of=l1] (l2){$G_2^*$}; 
	\node[node distance = 2cm, right of=l2] (l3){}; 
	\node[node1, node distance = 2cm, below of=l0] (l0b) {};
	\node[left of=l0b, node distance=2.5cm] (l0bl) {\textbf{Graphs}};
	\node[node4, node distance = 2cm, below of=l1] (l1b) {};
	\node[node5, node distance = 2cm, below of=l2] (l2b) {};
	
	\draw[->,thick] (l0) -- (l1) node[midway, above] {};
	\draw[->,thick] (l1) -- (l2) node[midway, above] {};
	\draw[->,thick] (l2) -- (l3) node[midway, above] {stops};
	
	\draw[<-,dotted,thick] (l0) -- (l0b) {};
	\draw[<-,dotted,thick] (l1) -- (l0b) {};
	\draw[<-,dotted,thick] (l1) -- (l1b) {};
	\draw[<-,dotted,thick] (l2) -- (l0b) {};
	\draw[<-,dotted,thick] (l2) -- (l1b) {};
	\draw[<-,dotted,thick] (l2) -- (l2b) {};
	
\end{tikzpicture}

\vspace{3mm}

Iteration 1: $\mathcal{G}_\mathit{uncovered}=\emptyset$, \hspace{2mm} 
$\mathcal{G}^* = \mathcal{G}^*_\mathit{partial}=\lbrace G_2^*\rbrace=\lbrace$ 
\begin{tikzpicture}[scale=0.35]
	\draw[gray,thick] (0,0) -- (1,0);
	\draw[red,thick] (1,0) -- (0.5,0.866);
	\draw[gray,thick] (0.5,0.866) -- (0,0);
\end{tikzpicture}
$\rbrace$
\caption {Construction of an exact cover $\mathcal{G^*}$ of a set of graphs
$\mathcal{G}$. 
The process of graying edges stops as it 
would otherwise produce a gray triangle including a blue triangle. 
The construction of an exact cover terminates in this case after one iteration.
The dotted arrows indicate which graph belongs to which 
generalization.\label{fig:gen}}
\end{figure}

\subsection{Proof that 
\texorpdfstring{$\cR(3,5,k)$}{R(3,5,k)} is Covered by 
\texorpdfstring{$\cR^*(3,5,k)$}{R*(3,5,k)}}\label{sec:proofenum}

In this section, we only present the proof
for the covers $\cR^*(3,5,k)$ since the proof for the covers $\cR^*(4,4,k)$  
follows by an analogous argument.
Given the result presented in Section~\ref{sec:deg},
it is enough to consider the cases of a splitting vertex with degree 
$d \in \{8, 10, 12\}$. Therefore, it would 
be enough to prove that $\cR^*(3,5,d)$ covers 
the set of graphs with property $\cR(3,5,d)$.
However, to do so, we found it easier to prove the stronger result:
\[\forall\ 5 \leq k \leq 13.\ 
G \mbox{\ has\ property\ } \cR(3,5,k) 
\Rightarrow \exists G^* \in \cR^*(3,5,k).\ G \in G^*
\]

We prove the result by a finite induction over the number of vertices $k$.\\
The base case $k=5$ consists of searching for all the possible graphs 
with property $\cR(3,5,5)$ and show that they appear modulo isomorphism in 
one of the generalizations in $\cR^*(3,5,5)$.\\
The inductive case is similar. The main difference is that we start the search 
from a generalization $G^*$ instead of the empty generalization.
We prove that for all generalizations $G^*$ in $\cR^*(3,5,k)$, any 
extension of $G^*$ by one vertex that respects the property $\cR^*(3,5,k+1)$ is 
isomorphic to an element of $\cR^*(3,5,k+1)$. This is achieved by exploring
all possible colorings (in blue or in red) of edges that are either gray or contain
the new vertex. This extension process is depicted in Figure~\ref{fig:ext}.
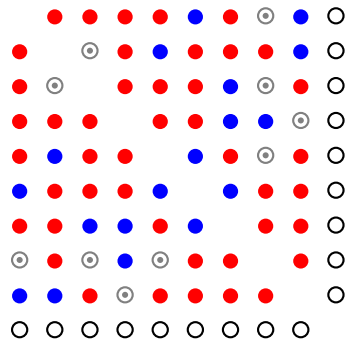
\begin{figure}
	\centering
	\begin{tikzpicture}[]
		\matrix (adjacency) [matrix of math nodes] at (0,0)
		{ & \redr & \redr & \redr & \redr & \blub & \redr & \grag & \blub & \trat 
		\\
		\redr &   & \grag & \redr & \blub & \redr & \redr & \redr & \blub & \trat  
		\\
		\redr & \grag &   & \redr & \redr & \redr & \blub & \grag & \redr & \trat  
		\\
		\redr & \redr & \redr &   & \redr & \redr & \blub & \blub & \grag & \trat  
		\\
		\redr & \blub & \redr & \redr &   & \blub & \redr & \grag & \redr & \trat  
		\\
		\blub & \redr & \redr & \redr & \blub &   & \blub & \redr & \redr & \trat  
		\\
		\redr & \redr & \blub & \blub & \redr & \blub &   & \redr & \redr  & \trat 
		\\
		\grag & \redr & \grag & \blub & \grag & \redr & \redr &   & \redr  & \trat 
		\\
		\blub & \blub & \redr & \grag & \redr & \redr & \redr & \redr &     & 
		\trat \\
		\trat & \trat & \trat & \trat & \trat & \trat & \trat & \trat & \trat &
		\\};
	\end{tikzpicture}
	\caption{Extension of an $\cR^*(3,5,9)$-generalization depicted as
		an adjacency matrix.
		The first 9 rows and columns represent
		the vertices $x_0$ to $x_8$ of the $\cR^*(3,5,9)$-generalization.
		Gray edges are represented by dotted gray circles.
		Edges containing the extension vertex $x_9$ (last row and column) are 
		represented by 
		black circles.
		\label{fig:ext}}
\end{figure}

The formalization and efficiency of the previous arguments rely
on our custom-made solver for labeled graphs. Our solver mostly works like
a DPLL SAT solver~\cite{DBLP:journals/jacm/DavisP60}.
The principal difference is that it represents clauses as essentially 
first-order formulas.
We show how we represent the property  $\cR^*(3,5,k+1)$ in first-order.
Given a graph $G$ of size $k+1$, represented by
a binary relation $E$ over a set of vertices $V=[|0,k|]=\{ 0,1,\ldots,k 
\}$, our 
first-order 
representation of the statement ``$G$ has property $\cR^*(3,5,k+1)$'' is
given by the two formulas:
\begin{align*}
	\forall\underset{\mathit{distinct}}{x_0 x_1 x_2} < k \!+\! 1.\ &
	\neg E x_0 x_1 \vee \neg E x_0 x_2 \vee \neg E x_1 x_2\\
	\forall\underset{\mathit{distinct}}{x_0 x_1 x_2 x_3 x_4} < k \!+\! 1.\ &
	 E x_0 x_1 \vee E x_0 x_2 \vee E x_0 x_3 \vee E x_0 x_4 \vee E x_1 x_2 \vee \\
	&E x_1 x_3 \vee E x_1 x_4 \vee E x_2 x_3 \vee E x_2 x_4 \vee E x_3 x_4
\end{align*}

where $\mathit{distinct}$ means that we add inequalities between each of pair of 
quantified variables. For example,
$\forall\underset{\mathit{distinct}}{x_0 x_1 x_2} < k+1.\ P[x_1,x_2,x_3]$ 
stands
for: \[\forall x_0 x_1 x_2.\ 
(x_0 \!<\! k \!+\! 1 \wedge x_1 \!<\! k \!+\! 1 \wedge x_2 \!<\! k \!+\! 1 \wedge
x_0 \!\neq\! x_1 \wedge x_0 \!\neq\! x_2 \wedge x_1 \!\neq\! x_2) \Rightarrow 
P[x_1,x_2,x_3]\]

Our solver is designed to work exclusively on graphs 
where edges are represented by first-order literals and the coloring of an edge 
$(i,j)$ to blue (resp.\ red) corresponds to assuming the literal $E x_i x_j$ in 
the 
branch (resp.\ the negation of the literal $\neg E x_i x_j$).
A gray edge just indicates that no color for that edge is currently assumed.
The prover starts by assuming all literals corresponding
to colored edges in a generalization $G^*$ with $k$ vertices. 
It then explores all possible colorings of the gray edges and then all the 
possible coloring of the edges containing a new vertex $x_k$.
At the leaf, this produces a graph of size $k+1$ represented by all
the literals assumed in the branch.

By representing vertices by variables $x_0,\ldots,x_k$ instead
of the concrete value $0,\ldots,k$, it is easier to prove that the graphs in the 
leaves are indeed isomorphic to one of the elements of one of the generalizations
$G' \in \cR^*(3,5,k+1)$. 
To prove that every generalization in $\cR^*(3,5,k)$
extends to graphs belonging to a generalization in 
$\cR^*(3,5,k+1)$, we will prove that there is no way to extend a generalization in 
$\cR^*(3,5,k)$ if we forbid the creation of any graph that is a 
member of a generalization in $\cR^*(3,5,k+1)$.
Given a generalization $G' \in \cR^*(3,5,k+1)$ with a set of 
blue edges $\mathit{Blue}$ and a set of 
red edges $\mathit{Red}$, we use the following formula to forbid the creation of 
an 
element 
of $G'$:
\[\forall \underset{\mathit{distinct}}{x_0 \ldots x_i \ldots x_j \ldots x_k} < 
k+1.\ \ 
((\bigwedge\limits_{(i,j) \in \mathit{Blue}}
E x_i x_j)\ 
\wedge \
(\bigwedge\limits_{(i,j) \in \mathit{Red}}
\neg E x_i x_j)) \Rightarrow \bot
\]
Note that this implies that all permutations of graphs that are members of
this generalization are forbidden. 
The first reason is that the formula does not assume any constraints on the gray 
edges of $G'$ therefore forbids all members of $G'$. 
The second reason is one can permute the 
indices of variables by a simple 
instantiation of the variables with a permutation being given by the 
\nauty algorithm to make the labeled graph on the branch match
with one of the labeled generalizations.

\section{Gluing}~\label{sec:gluing}
In the previous section, we constructed covers for $\cR(3,5,d)$-graphs 
and $\cR(4,4,24-d)$-graphs. 
The next step of our proof is to prove
that given a generalization $G^*$ in $\cR^*(3,5,d)$ and a generalization in $H^*$ 
in $\cR^*(3,5,24-d)$, there is no way to extend color gray edges and transverse 
edges to form an $\cR(4,5,24)$-graph (see Figure~\ref{fig:hi}) and thus an 
$\cR(4,5,25)$-graph by 
adding the splitting vertex.
In the rest of this section, we can ignore
clique constraints that include the splitting vertex as they are already
satisfied. This is a consequence of the fact that our covers are exact covers.
All our gluing problems are formulated at the propositional level and 
contain the following clauses representing the property $\cR(4,5,24)$.
Let us number the vertices of an $\cR^*(3,5,d)$-generalization $G^*$ from $0$ to 
$d-1$, and the vertices of a  $\cR^*(4,4,24-d)$-generalization graph $H^*$ from 
$d$ to $23$.\\
\[
\mbox{For each subset $S \subset [|0,23|]$ of size 4, we create the clause }
\bigvee\limits_{a,b \in S \wedge a < b}\neg E_{a,b}\ .\]
\[
\mbox{For each subset $T \subset [|0,23|]$ of size 5, we create the clause 
}\bigvee\limits_{a,b \in T \wedge a < b} E_{a,b}\ .\]

In all these propositional clauses, $E_{a,b}$ 
is a propositional variable that is true if there is a blue edge between $a$ and
$b$ and
that is false if there is a red edge between $a$ and
$b$.
One can note that any clauses containing only vertices from $G^*$ or only vertices 
from $H^*$ can be omitted as $G^*$ and $H^*$ provably avoid any blue 
4-clique or any red 5-clique. This removal occurs naturally as a consequence of 
performing unit propagation.
In each gluing problem ($G^*$,$H^*$), we add
unit clauses for each colored edge (red or blue) of $G^*$ and $H^*$. 
If an edge $(a,b)$ with $a<b$ is blue 
then we add the unit clause $E_{a,b}$, if it is
red then we add the unit clause $\neg E_{a,b}$. If an edge is gray we
do not add a unit clause. Together, with the clique clauses this forms our SAT 
problem that is sent to the \minisat interface. 
In practice, we had to
perform unit propagation to reduce the number of clauses before sending a 
problem to the interface. This is due to some limitations in the 
interface as this does not happen when we call the SAT solver directly.

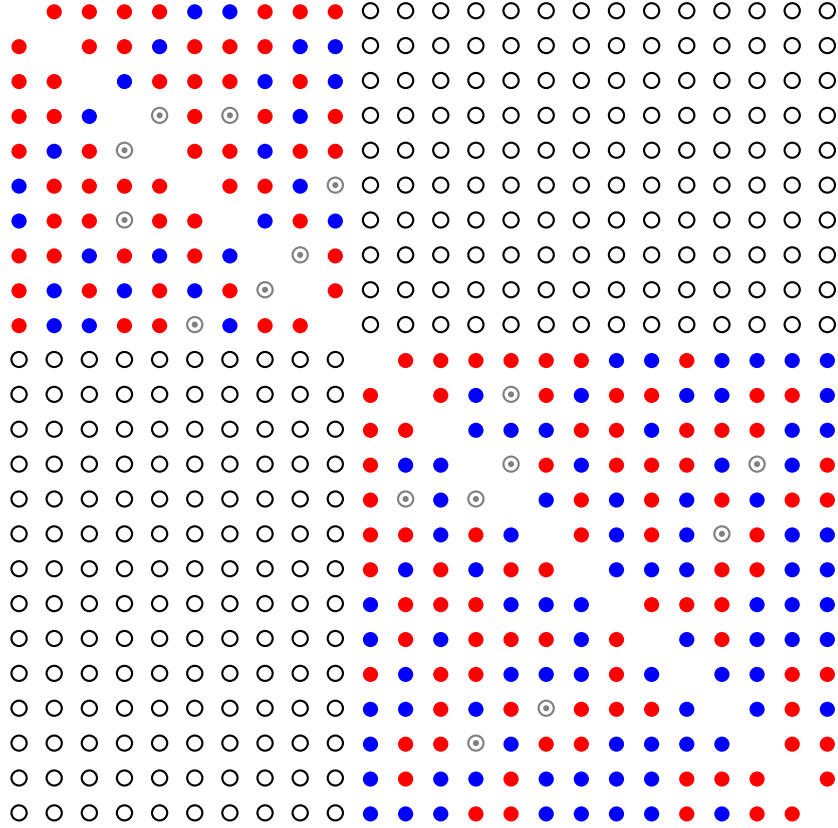
\begin{figure}[ht]
	\centering
 \begin{tikzpicture}[]
	\matrix (adjacency) [matrix of math nodes] at (0,0)
	{
   & \redr & \redr & \redr & \redr & \blub & \blub & \redr & \redr & \redr & 
\trat & \trat & \trat & \trat & \trat & \trat & \trat & \trat & \trat & \trat & 
\trat & \trat & \trat & \trat \\
\redr &   & \redr & \redr & \blub & \redr & \redr & \redr & \blub & \blub & 
\trat & \trat & \trat & \trat & \trat & \trat & \trat & \trat & \trat & \trat & 
\trat & \trat & \trat & \trat \\
 \redr & \redr &   & \blub & \redr & \redr & \redr & \blub & \redr & \blub & 
\trat & \trat & \trat & \trat & \trat & \trat & \trat & \trat & \trat & \trat & 
\trat & \trat & \trat & \trat \\
 \redr & \redr & \blub &   & \grag & \redr & \grag & \redr & \blub & \redr & 
\trat & \trat & \trat & \trat & \trat & \trat & \trat & \trat & \trat & \trat & 
\trat & \trat & \trat & \trat \\
\redr & \blub & \redr & \grag &   & \redr & \redr & \blub & \redr & \redr & 
\trat & \trat & \trat & \trat & \trat & \trat & \trat & \trat & \trat & \trat & 
\trat & \trat & \trat & \trat \\
 \blub & \redr & \redr & \redr & \redr &   & \redr & \redr & \blub & \grag & 
\trat & \trat & \trat & \trat & \trat & \trat & \trat & \trat & \trat & \trat & 
\trat & \trat & \trat & \trat \\
 \blub & \redr & \redr & \grag & \redr & \redr &   & \blub & \redr & \blub & 
\trat & \trat & \trat & \trat & \trat & \trat & \trat & \trat & \trat & \trat & 
\trat & \trat & \trat & \trat \\
 \redr & \redr & \blub & \redr & \blub & \redr & \blub &   & \grag & \redr & 
\trat & \trat & \trat & \trat & \trat & \trat & \trat & \trat & \trat & \trat & 
\trat & \trat & \trat & \trat \\
\redr & \blub & \redr & \blub & \redr & \blub & \redr & \grag &   & \redr & 
\trat & \trat & \trat & \trat & \trat & \trat & \trat & \trat & \trat & \trat & 
\trat & \trat & \trat & \trat \\
 \redr & \blub & \blub & \redr & \redr & \grag & \blub & \redr & \redr &   & 
\trat & \trat & \trat & \trat & \trat & \trat & \trat & \trat & \trat & \trat & 
\trat & \trat & \trat & \trat \\
 \trat & \trat & \trat & \trat & \trat & \trat & \trat & \trat & \trat & \trat 
&   & \redr & \redr & \redr & \redr & \redr & \redr & \blub & \blub & \redr & 
\blub & \blub & \blub & \blub \\
 \trat & \trat & \trat & \trat & \trat & \trat & \trat & \trat & \trat & \trat 
& \redr &   & \redr & \blub & \grag & \redr & \blub & \redr & \redr & \blub & 
\blub & \redr & \redr & \blub \\
 \trat & \trat & \trat & \trat & \trat & \trat & \trat & \trat & \trat & \trat 
& \redr & \redr &   & \blub & \blub & \blub & \redr & \redr & \blub & \redr & 
\redr & \redr & \blub & \blub \\
 \trat & \trat & \trat & \trat & \trat & \trat & \trat & \trat & \trat & \trat 
& \redr & \blub & \blub &   & \grag & \redr & \blub & \redr & \redr & \redr & 
\blub & \grag & \blub & \redr \\
 \trat & \trat & \trat & \trat & \trat & \trat & \trat & \trat & \trat & \trat 
& \redr & \grag & \blub & \grag &   & \blub & \redr & \blub & \redr & \blub & 
\redr & \blub & \redr & \redr \\
 \trat & \trat & \trat & \trat & \trat & \trat & \trat & \trat & \trat & \trat 
& \redr & \redr & \blub & \redr & \blub &   & \redr & \blub & \redr & \blub & 
\grag & \redr & \blub & \blub \\
 \trat & \trat & \trat & \trat & \trat & \trat & \trat & \trat & \trat & \trat 
& \redr & \blub & \redr & \blub & \redr & \redr &   & \blub & \blub & \blub & 
\redr & \redr & \blub & \blub \\
 \trat & \trat & \trat & \trat & \trat & \trat & \trat & \trat & \trat & \trat 
& \blub & \redr & \redr & \redr & \blub & \blub & \blub &   & \redr & \redr & 
\redr & \blub & \blub & \blub \\
 \trat & \trat & \trat & \trat & \trat & \trat & \trat & \trat & \trat & \trat 
& \blub & \redr & \blub & \redr & \redr & \redr & \blub & \redr &   & \blub & 
\redr & \blub & \blub & \blub \\
 \trat & \trat & \trat & \trat & \trat & \trat & \trat & \trat & \trat & \trat 
& \redr & \blub & \redr & \redr & \blub & \blub & \blub & \redr & \blub &   & 
\blub & \blub & \redr & \redr \\
 \trat & \trat & \trat & \trat & \trat & \trat & \trat & \trat & \trat & \trat 
& \blub & \blub & \redr & \blub & \redr & \grag & \redr & \redr & \redr & \blub 
&   & \blub & \redr & \blub \\
 \trat & \trat & \trat & \trat & \trat & \trat & \trat & \trat & \trat & \trat 
& \blub & \redr & \redr & \grag & \blub & \redr & \redr & \blub & \blub & \blub & 
\blub &   & \redr & \redr \\
 \trat & \trat & \trat & \trat & \trat & \trat & \trat & \trat & \trat & \trat 
& \blub & \redr & \blub & \blub & \redr & \blub & \blub & \blub & \blub & \redr & 
\redr & \redr &   & \redr \\
 \trat & \trat & \trat & \trat & \trat & \trat & \trat & \trat & \trat & \trat 
& \blub & \blub & \blub & \redr & \redr & \blub & \blub & \blub & \blub & \redr & 
\blub & \redr & \redr &   \\};
\end{tikzpicture}
\caption{The adjacency matrix of a graph of size 24
where a partial coloring is given by a
 generalization $G^*$ with 4 gray edges (dotted gray circles) with vertices 
 numbered from 
 0 to 9 and a 
 generalization $H^*$ with 4 gray edges with vertices 
 numbered from 
 10 to 23.
 The goal of the SAT solvers is to prove that there is no way to assign a
 color (blue or red) to the gray edges and the transverse edges (black circles) 
 without creating a blue 4-clique or a red 5-clique. 
 \label{fig:hi}
}
\end{figure}

\subsection{Simplicity Heuristic}~\label{sec:simplicity}
In this section, we design a heuristic that will be used
to construct covers resulting in easier problems for the SAT solver.
Let $b_{k}$ represent the number of blue 
$k$-cliques in $G^*$.
Let $r_{k}$ represent the number of red $k$-cliques in $G^*$.
Let $b'_{k}$ represent the number of blue $k$-cliques in $H^*$.
Let $r'_{k}$ represent the number of red $k$-cliques in $H^*$.
We use the following formula to estimate the difficulty of a gluing problem 
$(G^*,H^*)$:
\[\mathit{simplicity}(G^*,H^*) =\frac{1}{2^3} b_1 b'_3 + \frac{1}{2^4}b_2 b'_2 +
\frac{1}{2^6} r_2 r'_3 + \frac{1}{2^6}r_3 r'_2 + 
\frac{1}{2^4} r_4 r'_1\]
Our formula is originally designed to estimate the simplicity of
a problem of gluing an $\cR(3,5,d)$-graph $G$ with an $\cR(4,4,24-d)$-graph $H$. 
There, 5 different types of configurations that may create a blue 4-clique or a 5 
red-clique as illustrated in Figure~\ref{fig:config}. In the resulting SAT solving
problem after unit propagation, a clause mentioning only the transverse edges is 
associated with each configuration. 
The above heuristic can be derived from a more general heuristic for a SAT problem 
$P$: 
\[\mathit{simplicity}(P) = \sum_{c \in P} \frac{1}{2^{|c|}}\]
where $|c|$ is the number of literals in a clause $c$.
This heuristic operates under the simplistic assumption that each clause covers 
separate cases, allowing it to estimate the extent of search space coverage by 
summing up the contribution of each clause.
Experimentally, we found that this heuristic is only effective to compare problems 
with the same number of variables. Consequently, we ignore clauses containing gray 
edges from all gluing problems when computing their simplicity, as we will use
this heuristic to compare generalizations with varying numbers of gray edges.
Another advantage of ignoring clauses containing gray edges is that the heuristic 
will prefer problems that can delay
splitting on the color of gray edges as much as possible, which favors
proof sharing.

\begin{figure}[ht]
\begin{center}
\begin{tikzpicture}[]
	\coordinate (bG11) at (0,0.5);
	\coordinate (bH31) at (3,0);
	\coordinate (bH32) at (2.134,0.5);
	\coordinate (bH33) at (3,1);
	\draw[blue, thick] (bH31) -- (bH32);
	\draw[blue, thick] (bH32) -- (bH33);
	\draw[blue, thick] (bH33) -- (bH31);
	\draw[dotted,thick] (bG11) -- (bH31);
	\draw[dotted,thick] (bG11) -- (bH32);
	\draw[dotted,thick] (bG11) -- (bH33);
	
	\foreach \point in {bG11,bH31,bH32,bH33}
	\node[draw, circle, fill=black, inner sep=1pt] at (\point) {};
\end{tikzpicture}	
\hspace{10mm}	
\begin{tikzpicture}[]
	\coordinate (bG21) at (0,0);
	\coordinate (bG22) at (0,1);
	\coordinate (bH21) at (3,0);
	\coordinate (bH22) at (3,1);
	\draw[blue, thick] (bG21) -- (bG22);
	\draw[blue, thick] (bH21) -- (bH22);
	\draw[dotted,thick] (bG21) -- (bH21);
	\draw[dotted,thick] (bG22) -- (bH22);
	\draw[dotted,thick] (bG21) -- (bH22);
	\draw[dotted,thick] (bG22) -- (bH21);
	
	\foreach \point in {bG21,bG22,bH21,bH22}
	\node[draw, circle, fill=black, inner sep=1pt] at (\point) {};
\end{tikzpicture}		
\end{center}

\begin{center}	
\begin{tikzpicture}[]
\coordinate (rG31) at (3,0);
\coordinate (rG32) at (3,1);
\coordinate (rG33) at (2.134,0.5);
\coordinate (rH21) at (0,0);
\coordinate (rH22) at (0,1);
\draw[red, thick] (rG31) -- (rG32);
\draw[red, thick] (rG32) -- (rG33);
\draw[red, thick] (rG33) -- (rG31);
\draw[red, thick] (rH21) -- (rH22);
\draw[dotted,thick] (rG31) -- (rH21);
\draw[dotted,thick] (rG31) -- (rH22);
\draw[dotted,thick] (rG32) -- (rH21);
\draw[dotted,thick] (rG32) -- (rH22);
\draw[dotted,thick] (rG33) -- (rH21);
\draw[dotted,thick] (rG33) -- (rH22);

\foreach \point in {rG31,rG32,rG33,rH21,rH22}
\node[draw, circle, fill=black, inner sep=1pt] at (\point) {};
\end{tikzpicture}	
\hspace{10mm}
\begin{tikzpicture}[]
	
\coordinate (rG31) at (0,0);
\coordinate (rG32) at (0,1);
\coordinate (rG33) at (0.866,0.5);
\coordinate (rH21) at (3,0);
\coordinate (rH22) at (3,1);

\draw[red, thick] (rG31) -- (rG32);
\draw[red, thick] (rG32) -- (rG33);
\draw[red, thick] (rG33) -- (rG31);
\draw[red, thick] (rH21) -- (rH22);
\draw[dotted,thick] (rG31) -- (rH21);
\draw[dotted,thick] (rG31) -- (rH22);
\draw[dotted,thick] (rG32) -- (rH21);
\draw[dotted,thick] (rG32) -- (rH22);
\draw[dotted,thick] (rG33) -- (rH21);
\draw[dotted,thick] (rG33) -- (rH22);

\foreach \point in {rG31,rG32,rG33,rH21,rH22}
\node[draw, circle, fill=black, inner sep=1pt] at (\point) {};
\end{tikzpicture}
\hspace{10mm}
\begin{tikzpicture}[]
	
	\coordinate (rG41) at (0,0);
	\coordinate (rG42) at (0,1);
	\coordinate (rG43) at (1,1);
	\coordinate (rG44) at (1,0);
	\coordinate (rH11) at (3,0.5);
	
	\draw[red, thick] (rG41) -- (rG42);
	\draw[red, thick] (rG42) -- (rG43);
	\draw[red, thick] (rG43) -- (rG44);
	\draw[red, thick] (rG44) -- (rG41);
	\draw[red, thick] (rG41) -- (rG43);
	\draw[red, thick] (rG42) -- (rG44);
	\draw[dotted,thick] (rG41) -- (rH11);
	\draw[dotted,thick] (rG42) -- (rH11);
	\draw[dotted,thick] (rG43) -- (rH11);
	\draw[dotted,thick] (rG44) -- (rH11);

	\foreach \point in {rG41,rG42,rG43,rG44,rH11}
	\node[draw, circle, fill=black, inner sep=1pt] at (\point) {};
\end{tikzpicture}
\end{center}
\caption{The five possible types of configurations.
	    In each configuration, a colored clique in an $\cR(3,5,d)$-graph is 
	    displayed on the left and 
	    a colored clique in an $\cR(4,4,24-d)$-graph is displayed on the right.
	    Transverse edges are shown as dotted black edges.
	    Transverse edges must not all be blue in blue configurations
	     and they must not all be red in red configurations.
	     \label{fig:config}}
\end{figure}
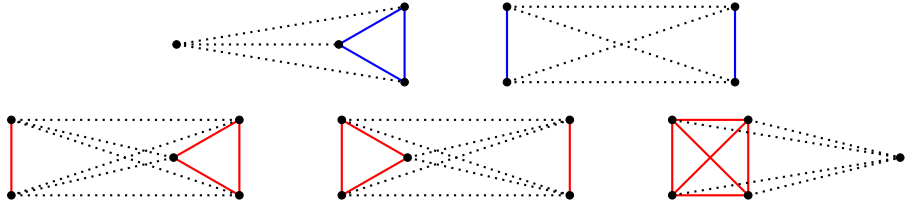
%

We now test how good the simplicity score is at predicting the run
time of \minisat via the \holfour interface on 
200 gluing problems between
$\cR(3,5,10)$-graphs and $\cR(4,4,14)$-graphs in Figure~\ref{fig:simp}. The 
results reveal that our simplicity score is a good predictor in this setting.

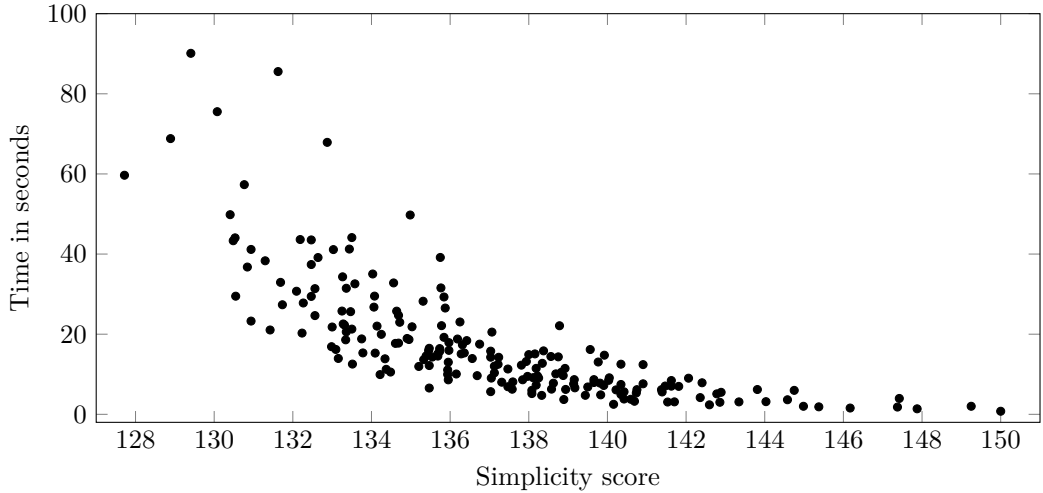
\begin{figure}[ht]
	\centering
	\begin{tikzpicture}[scale=1.0]
			\begin{axis}[
					legend style={anchor=north east, at={(0.95,0.4)}},
					width=\textwidth,
					height=0.5*\textwidth,xmin=127, xmax=151,
					ymin=-2, ymax=100,
					xlabel = {Simplicity score},
					ylabel = {Time in seconds} 
					]
					\addplot[mark size=1.5pt,only marks] table[x=information, 
		y=time] 
					{heuristic};
				\end{axis}
		\end{tikzpicture}
	\caption{Relation between the simplicity score of a problem and the 
			time required by the \holfour interface to \minisat to prove that it is
			unsatisfiable.
			Each problem consists of an $\cR(3,5,10)$-graph and 
			an $\cR(4,4,14)$-graph.
			Each dot represents one problem among a random sample of 200 gluing 
			problems. \label{fig:simp}}
\end{figure}

Finally, during the construction of a cover for $\mathcal{G}$ we are not aware of 
the 
corresponding cover for $\mathcal{H}$. The covers would have to be built 
simultaneously making the algorithm more complicated.
To avoid those complications, 
we devise a measure to predict if a generalization will create 
difficult problems on its own without depending on the possible counterparts.
To this end, we chose to estimate the simplicity of a generalization $G^*$ by 
how difficult it is to glue it with an average counterpart graph $\bar{H}$. 
Let $\bar{b}'_{k}$ represent the average number of blue $k$-cliques 
per graph in $\mathcal{H}$
and $\bar{r}'_{k}$ represent the average number of red $k$-cliques per graph in 
$\mathcal{H}$,
the simplicity of $G^*$ is:
\[\mathit{simplicity}(G^*) = \frac{1}{2^3} b_1 \bar{b}'_3 + \frac{1}{2^4}b_2 
\bar{b}'_2 +
\frac{1}{2^6} r_2 \bar{r}'_3 + \frac{1}{2^6}r_3 \bar{r}'_2 + 
\frac{1}{2^4} r_4 \bar{r}'_1\]
Similarly, 
let $\bar{b}_{k}$ represent the average number of blue $k$-cliques per graph in 
$\mathcal{G}$
and $\bar{r}_{k}$ represent the average number of red $k$-cliques per graph in 
$\mathcal{G}$,
the simplicity of $H^*$ is:
\[\mathit{simplicity}(H^*) = \frac{1}{2^3} \bar{b}_1 b'_3 + \frac{1}{2^4}\bar{b}_2 
b'_2 +
\frac{1}{2^6} \bar{r}_2 r'_3 + \frac{1}{2^6}\bar{r}_3 r'_2 + 
\frac{1}{2^4} \bar{r}_4 r'_1\]

\subsection{Creation of Better Covers by Parameter Search}\label{sec:better}

We now improve the exact cover algorithm by relying on our simplicity heuristic
in two places. The first one is during the selection of edges. Previously,
the edges were selected at random as long as the produced generalization
satisfied some conditions. Now, we will select, among the possible edges allowed
by the conditions, the one
that produces the generalization with the highest simplicity score. We call
this strategy \textit{fastest} in Table~\ref{tab:test}. The second place where the 
simplicity 
score will influence the algorithm is during the selection of generalizations. 
Previously, we selected maximal generalizations $G^*$ with highest coverage
value $n_\mathit{cover}(G^*) = | G^* \cap \mathcal{G_\mathit{uncover}} |$. 
This strategy was called the \textit{greedy cover} strategy.
Now, we will also prefer generalizations with higher simplicity scores. Since we 
want to optimize for both objectives at the same time, we will select the graph 
$G^*$ with the highest combined score $\mathit{simplicity}(G^*)^{c} \times 
n_\mathit{cover}(G^*)$ where $\mathit{c}$ is a real number parameter 
influencing how much one heuristic is preferred over the other. 
In Table~\ref{tab:test}, we call this selection strategy \textit{mixed-c}.

Although the simplicity score is important to reduce the difficulty of the 
problems,
the most important parameter in reducing the total computation time is the number
of maximum allowed gray edges in each generalization.
In Table~\ref{tab:test}, we optimize those parameters for the three relevant 
cases $d=8,10,12$ for the gluing. Each experiment consists of a line in 
Table~\ref{tab:test}. 
There, we compute new covers with different parameters. To figure out the best 
parameters, we estimate the run time of an 
average problem by sampling 200 random problems from a pair of covers.
We then multiply this estimate by the number of problems this pair of covers would 
create to get an estimated total run time for this pair of covers.
In the end, we decided not to go with the best parameters according to the
estimated times given in Table~\ref{tab:test}.
The reason is that by increasing the number of gray edges, the total number of 
problems is reduced but the difficulty of each problem is increased.
This makes the problems harder and they would have taken more memory than
we had available. That is why we chose to compromise and instead use the fastest 
parameter settings, shown in bold in Table~\ref{tab:test}, 
that would not use more memory than available on our machines.
Table~\ref{tab:reduction} gives a comparison of the total run time for our final 
problems with the total runtime that we would get by simply gluing pairs of graphs 
instead 
of generalizations. For degree $d = 10$, the number of gluing problems
is reduced by a factor of $81.0$ and the estimated total time by a factor $14.6$.
For degree $d = 12$, the number of gluing problems is reduced by a factor of 
$53.9$ and the estimated total time by a factor of $20.6$.

\begin{table}[ht]
	\centering
	\begin{tabular}{llllll}
		\toprule
		Gluing & 3,5 & 4,4 & Edge sel. & Gen. sel. & CPU-days (estimation)\\
		\midrule
		3,5,8-4,4,16 & 0 & 0 & none & none &  0.055 \\
		             & \textbf{4} & \textbf{0} & \textbf{fastest} & 
		             \textbf{mixed-0.5} & \textbf{0.018} \\
		\midrule
		3,5,10-4,4,14 & 0 & 0 & none & none & 8373 \\
		& 3 & 4 & fastest & mixed-1.0 & 725\\
		& 4 & 3 & fastest & mixed-1.0 & 689\\
		& 4 & 4 & random & greedy cover & 734\\
		& 4 & 4 & fastest & mixed-10.0 & 625\\
		& 4 & 4 & fastest & mixed-2.0 & 595\\
		& 4 & 4 & fastest & mixed-1.0 & 658\\
		& \textbf{4} & \textbf{4} & \textbf{fastest} & \textbf{mixed-0.5} 
				& 
		\textbf{572}\\
		& 4 & 4 & fastest & mixed-0.1 & 706\\	
		& 5 & 4 & fastest & mixed-1.0 & 547\\
		& 4 & 5 & fastest & mixed-1.0 & 586\\
		& 5 & 5 & fastest & mixed-0.5 & 396\\
		\midrule
		3,5,12-4,4,12 & 0 & 0 & none & none & 7702 \\
		& 2 & 6 & fastest & mixed-0.5 & 641\\
		& 3 & 6 & fastest & mixed-0.5 & 782\\
		& 4 & 6 & fastest & mixed-0.5 & 784\\		
		& \textbf{0} & \textbf{8} & \textbf{fastest} & \textbf{mixed-0.5} & 
		\textbf{374}\\
		& 1 & 8 & fastest & mixed-0.5 & 353\\
		& 2 & 8 & fastest & mixed-0.5 & 538\\
		& 3 & 8 & fastest & mixed-0.5 & 360\\
		& 4 & 8 & fastest & mixed-0.5 & 419\\
		\bottomrule 
	\end{tabular}
	\caption{Tested parameters for creating exact covers. The columns titled 3,5 
	and 4,4 show the maximum number of gray edges allowed during the 
	construction of the 
	cover. \label{tab:test}}
\end{table}

\begin{table}[t]
	\centering
	\begin{tabular}{rrrrrrrrr}
		  \toprule
		       & \multicolumn{4}{c}{Graphs} & \multicolumn{4}{c}{Generalizations}\\
		       \cmidrule(lr){2-5}
		       \cmidrule(lr){6-9}
		     d & 3,5,d & 4,4,24-d & problems & days & 3,5,d
		       & 4,4,24-d & problems & days\\
		     \midrule
		     8 & 179 & 2 & 358 & 0.055 & 27 & 2 & 54 & 0.018\\
		     10 & 313 & 130816 & 40945408 & 8373 & 43 & 11752 & 505336 & 572\\
		     12 & 12 & 1449166 & 17389992 & 7702 & 12 & 26845 & 322140 & 374\\
		\bottomrule
	\end{tabular}
	\caption{Reduction of the number of SAT solver 
	calls and faster estimated times in days \label{tab:reduction}
 }
\end{table}

\section{Combining the Different Parts of the Proof}\label{sec:final}
Our proof is expressed using three different formal representations of
mathematical statements. 
In Section~\ref{sec:deg}, we express our statements in a higher-order form 
allowing 
us to make counting arguments. In Section~\ref{sec:proofenum}, we rely on an 
almost first-order representation to implement a custom
theorem prover for graphs and in particular to prove isomorphism between graphs. 
In Section~\ref{sec:gluing}, the problems are stated at the propositional 
level. Here, we first describe how we connect the different representations and as 
a consequence prove that $R(4,5) \leq 25$. Then, we give the proof of the existence
of an $\cR(4,5,24)$-graph and show that $R(4,5) > 24$. 

\subsection{Connecting Representations}\label{sec:connect}
We will start by translating propositional gluing lemmas to first-order formulas.
The SAT problems do not explicitly mention on which vertices
the graphs are lying on since they are only constraining SAT variables that
represent edges.
Surprisingly, one can instantiate the SAT variables $E_{i,j}$ by the
atom $E\ x_i\ x_j$ in the gluing problem. As a consequence, gluing problems for 
all permutations of edges are proved at once. 
We can also freely add the following additional constraints. All variables $x_i$ 
must be distinct and variables with indices less than the degree $d$ must have a 
value less than $d$ and other variables must have a value greater or equal to $d$. 
This ensures that our generalizations 
$G^*$ and $H^*$ have distinct sets of vertices.
We then prove that for each $\cR^*(3,5,d)$-generalization a
single theorem stating that this particular generalization can not be glued to
any of the corresponding generalizations in $\cR^*(4,4,24-d)$ by regrouping
gluing theorems. This step constructs $27$ theorems for degree $d=8$, 
$43$ theorems for degree $d=10$, and $12$ theorems for degree $d=12$.
These numbers correspond to the number of $\cR^*(3,5,d)$-generalizations
presented in Table~\ref{tab:cover}.
Together, these 27 theorems (respectively 43 and 12) can be used to prove a theorem 
stating that the splitting edge, represented by the vertex number 24,
cannot have degree 8 (respectively 10 and 12).
The higher-order version of these three final theorems
do not state on which set of vertices the neighbors and antineighbors should
lie although it requires them to form sets of nonnegative integers of
size $d$ and $24-d$ respectively. To prove the more general higher-order 
formulations, we rely on the fact that there is
a bijection from $[|0,d-1|]$ to sets of vertices of size $d$ and
a bijection from $[|d,23|]$ to sets of vertices of size $24-d$.
These three theorems together are enough
to prove that $R(4,5) \leq 25$ according to the proof given in 
Section~\ref{sec:deg}.

\subsection{Existence of an 
\texorpdfstring{$\cR(4,5,24)$}{R(4,5,25)}-graph}~\label{sec:existence}
To prove the existence of an $\cR(4,5,24)$-graph, we pick a graph
from the full list of $\cR(4,5,24)$-graphs compiled in 2016 and available
at~\cite{ramsey-graphs}.
This was necessary step 
to prove that $R(5,5)\leq 48$ as described 
in~\cite{DBLP:journals/jgt/AngeltveitM18}. 
For our purpose, we only need one arbitrary witness graph $G_0$ from that list. 
Let $B$ be the set of blue edges in $G_0$, we represent the graph $G_0$ as the 
relation: 
  \[E_0 =_{\mathit{def}} \lambda i j.\  
  \hspace{-4mm}\bigvee\limits_{(a,b) \in B \wedge a < b}\hspace{-4mm}
    (i  = a \wedge j = b) \vee (j = a \wedge i = b)\]
We prove on the first-order level
that this graph does not contain any blue 4-cliques or any red 5-cliques
which can be stated as:
\begin{align*}
&\vdash \forall\underset{\mathit{distinct}}{x_0 x_1 x_2 x_3} < 24.\
  \neg E_0 x_0 x_1 \vee \neg E_0 x_0 x_2 \vee \neg E_0 x_0 x_3 \vee 
  \neg E_0 x_1 x_2 \vee \neg E_0 x_1 x_3 \vee \neg E_0 x_2 x_3\\
&\vdash \forall\underset{\mathit{distinct}}{x_0 x_1 x_2 x_3 x_4} < 24.\
E_0 x_0 x_1 \vee E_0 x_0 x_2 \vee E_0 x_0 x_3 \vee E_0 x_0 x_4 \vee E_0 x_1 x_2 
\vee \\
&\hspace{33mm} E_0 x_1 x_3 \vee E_0 x_1 x_4 \vee E_0 x_2 x_3
  \vee E_0 x_2 x_4 \vee E_0 x_3 x_4
\end{align*}
This was achieved by repeatedly applying the following lemma to 
eliminate the quantified variables $\vdash (\bigwedge\limits_{x<24} P(x)) 
\Rightarrow (\forall x<24. 
P(x))$. To speed up the process, we stop applying the lemma as 
soon as we were able to prove the goal on the branch either because we find a red 
edge in blue clique (or a blue edge in a red clique) or because we have selected 
the same vertex twice in the clique. With these optimizations, the
existence of a graph can be verified in less than 15 minutes on a single CPU.
The connection with the higher-order formulation can be obtained
by proving that the set $[|0,23|]$ has cardinality 24.
And thus we get $R(4,5) > 24$ which together with $R(4,5) \leq 25$ gives:
 \[\vdash R(4,5)=25\]

\section{Reproducibility}\label{sec:repro}
We provide instructions on how to reproduce the proof in 
the \texttt{README.md} of our repository~\cite{ramsey-github}.

The computational resources necessary to run our proof are the following.
The final gluing step was run on 4 different machines allowing us
to finish the gluing phase in less than 9 days. This is slightly longer
than what we expected according to the estimated times. Two machines
were used for the $d=10$ case and the other 2 were used for the $d=12$ case. 
Three of them have 512 GB of RAM and one of them has 1024 GB of RAM.
All of those machines have 64 hyper-threaded CPU cores for a total of 128 
available CPUs. However, we only used 40 cores per machine as our main limitation
was memory. In all those machines, the same copy of \holfour was used
to get compatible timestamps in the produced gluing theories.
The other phases were much faster and required less memory
but were still run on the machine with more memory.
 
Potential issues one could encounter when trying to reproduce the 
proof are the following. As expected, all the technical problems come during or 
after the expensive 
gluing phase.
The first issue we discovered is that the communication files
between HOL4 and \minisat are stored in the temporary directory of the system.
Since the proof file produced by \minisat can be up to 2GB and our temporary 
directory sits in a partition of only 32GB, we ran out of memory in that partition
because we were running 40 processes in parallel.
So, we changed the temporary directory used by 
the HOL4 interface to \minisat by modifying the file \texttt{dimacsTools.sml}. We 
changed the temporary directory used by \minisat using the \texttt{TMPDIR} 
bash variable.
The second issue is that the reconstruction of a SAT proof in \holfour can
require a lot of memory (in the order of 20GB per problem) and 
time (about 3 times longer than the SAT solver call). 
We found out that creating theories with one theorem per 
theory diminished memory consumption. To guarantee that the memory consumption 
did not exceed a threshold, we also ran the scripts using \texttt{buildheap} 
instead of \texttt{Holmake} as the latter does not provide a way to limit the 
memory consumption per core.
Finally, we were not able to load all the gluing theories together into HOL4 in a 
reasonable time. Indeed, we observed a slowdown of time taken to load one theory 
as more and more theories were added making it impossible for us to load more than 
200,000 theories.
Thus, we used the following workaround instead.
We loaded the theorems produced by the 43 theorems for degree $d=10$ and
the 12 theorems for degree $d=12$, mentioned in Section~\ref{sec:connect},
without specifying in which theory they were proved. 
This was achieved
by creating an alternative theory loader where we omitted the call to 
the function \texttt{link\_parents}.
To ensure the safety of this procedure, we externally check 
that constructing the complete theory graph for our proof without any 
broken dependencies is possible in the directory \texttt{theorygraph} . There, we 
make sure that the time stamps are coherent and that there are no cycles.
In the complete theory graph, the final theory \texttt{r45\_equals\_24} has 
828857 ancestor theories.

\section{Related Works}\label{sec:rel}
Our formalization is based on the work of McKay and 
Radziszowski~\cite{DBLP:journals/jgt/McKayR95}.
Their proof already contains the three steps performed in our formalization, 
namely:
the degree constraints for a splitting, the creation of covers, and the proof
of the absence of satisfiable gluing problems.
We explain here how our proof differs in each of those steps.
In the first step, their proof mentions that it is not possible for all vertices
to have degree 11. We realized during the formalization that
this argument can be applied to prove that it is not possible for all vertices to  
have odd degrees.
This allows us 
to save time during the formalization compared to their original proof by 
additionally ignoring the cases of a splitting vertex of degree $d\in\{7,9,13\}$.
In the second step, their proof removes vertices from graphs to create 
generalizations which would correspond to graying all edges connected to
the removed vertices in our formalization. We tried this approach but the
large number of grayed edges made the problems very difficult for the
SAT solvers. Moreover, our tests gave an estimated time to completion much larger 
than with our more parsimonious approach to graying edges.
In the final step, they rely on a custom provers
for gluing generalizations and
spend a large part of the paper describing how they optimize its different
components. In contrast, we perform the gluing step with an existing SAT solver.
Furthermore, they state that
they prefer sparser (redder) generalizations when constructing a cover of 
$\cR(3,5,10)$-graphs and
denser (bluer) generalizations when constructing a cover of $\cR(4,4,12)$-graphs. 
Our approach instead relies on a more involved heuristic. 
We try to estimate the simplicity of a generalization $G^*$ by understanding 
which type of clauses would appear in a gluing problem 
where one of the generalizations is $G^*$.


Proving mathematical theorems in combinatorics with the help of SAT solvers
is not a new phenomenon.
For instance in~\cite{DBLP:journals/constraints/CodishFIM16}, the authors prove 
using SAT solvers that the 3-color Ramsey number 
R(4,3,3) is equal to 30. 
This proof contains consideration about the degrees of vertices,
a symmetry-breaking component to avoid considering
isomorphic graphs and an abstraction component relying on degree matrices
to represent sets of graphs. Ultimately, the approach relies on 
encoding each of these steps into SAT clauses and calling a SAT solver.
A more famous example is the proof of the Pythagorean Triple 
theorem~\cite{DBLP:conf/sat/HeuleKM16}.
In that paper, the authors rely on a cube-and-conquer, look-ahead methods and 
symmetry-breaking arguments. The DRAT proof produced by their custom SAT solver
was verified using an independent DRAT proof 
checker~\cite{DBLP:conf/sat/WetzlerHH14}.  
In a later work, the proof of the Pythagorean Triple 
theorem has been fully formalized in 
\coq~\cite{DBLP:journals/jar/Cruz-FilipeMS19}. There, they verified
symmetry-breaking arguments and an encoding of the problem into SAT.
For the computational part of the proof, they relied on \textsf{OCaml} 
code proven correct in \coq. This last 
step was necessary because of memory limitations but is generally considered 
slightly less safe than running the computation directly in \coq. In contrast, we 
were able to run the entire proof of $R(4,5)=25$, which is larger in
size (approximately a petabyte of DRAT proof files versus 200 terabytes), through 
the \holfour 
kernel.


\section{Conclusion}~\label{sec:concl}
We have created the first formalization of the theorem $R(4,5)=25$. This
verification was performed within the \holfour theorem prover.
During this process, we have realized that we can generalize the argument
given for degree $d=11$ to eliminate all odd degree cases.
We have designed a verified algorithm for regrouping similar graphs, creating 
generalizations and ultimately speeding up our subsequent proofs.
Finally, we have created and tested a heuristic for predicting the relative 
run-time of a SAT solver on gluing problems. This helped us choose generalizations 
that create easier gluing problems.

In the future, we would like to investigate if a proof of $R(4,5)=25$ is
intrinsically computational or if there exist additional high-level arguments that
could eliminate the need for a large computation.

\bibliographystyle{plainurl}
\bibliography{biblio,ate11}

\begin{thebibliography}{10}

\bibitem{DBLP:journals/jgt/AngeltveitM18}
Vigleik Angeltveit and Brendan~D. McKay.
\newblock R(5, 5) {\(\leq\)} 48.
\newblock {\em J. Graph Theory}, 89(1):5--13, 2018.
\newblock URL: \url{https://doi.org/10.1002/jgt.22235}, \href
  {https://doi.org/10.1002/JGT.22235} {\path{doi:10.1002/JGT.22235}}.

\bibitem{ramsey-graphs}
Vigleik Angeltveit, Brendan~D. McKay, and Stanislaw~P. Radziszowski.
\newblock Website section: {A}ll maximal {R}amsey(4,5)-graphs.
\newblock \url{https://users.cecs.anu.edu.au/~bdm/data/ramsey.html}, 2016.
\newblock Accessed on May 26, 2024.

\bibitem{appel1989every}
Kenneth~I Appel and Wolfgang Haken.
\newblock {\em Every planar map is four colorable}, volume~98.
\newblock American Mathematical Soc., 1989.

\bibitem{coq}
Yves Bertot.
\newblock A short presentation of {C}oq.
\newblock In Otmane~Ait Mohamed, César Muñoz, and Sofiène Tahar, editors,
  {\em Conference on Theorem Proving in Higher Order Logics (TPHOLs)}, volume
  5170 of {\em LNCS}, pages 12--16. Springer, 2008.
\newblock URL: \url{http://doi.org/10.1007/978-3-540-71067-7_3}.

\bibitem{DBLP:books/daglib/0009415}
B{\'{e}}la Bollob{\'{a}}s.
\newblock {\em Modern Graph Theory}, volume 184 of {\em Graduate Texts in
  Mathematics}.
\newblock Springer, 2002.
\newblock \href {https://doi.org/10.1007/978-1-4612-0619-4}
  {\path{doi:10.1007/978-1-4612-0619-4}}.

\bibitem{DBLP:journals/constraints/CodishFIM16}
Michael Codish, Michael Frank, Avraham Itzhakov, and Alice Miller.
\newblock Computing the {Ramsey} number {R}(4, 3, 3) using abstraction and
  symmetry breaking.
\newblock {\em Constraints An Int. J.}, 21(3):375--393, 2016.
\newblock URL: \url{https://doi.org/10.1007/s10601-016-9240-3}, \href
  {https://doi.org/10.1007/S10601-016-9240-3}
  {\path{doi:10.1007/S10601-016-9240-3}}.

\bibitem{DBLP:journals/jar/Cruz-FilipeMS19}
Lu{\'{\i}}s Cruz{-}Filipe, Jo{\~{a}}o Marques{-}Silva, and Peter
  Schneider{-}Kamp.
\newblock Formally verifying the solution to the {B}oolean {P}ythagorean
  triples problem.
\newblock {\em J. Autom. Reason.}, 63(3):695--722, 2019.
\newblock URL: \url{https://doi.org/10.1007/s10817-018-9490-4}, \href
  {https://doi.org/10.1007/S10817-018-9490-4}
  {\path{doi:10.1007/S10817-018-9490-4}}.

\bibitem{DBLP:journals/jacm/DavisP60}
Martin Davis and Hilary Putnam.
\newblock A computing procedure for quantification theory.
\newblock {\em J. {ACM}}, 7(3):201--215, 1960.
\newblock \href {https://doi.org/10.1145/321033.321034}
  {\path{doi:10.1145/321033.321034}}.

\bibitem{DBLP:conf/sat/EenS03}
Niklas E{\'{e}}n and Niklas S{\"{o}}rensson.
\newblock An extensible {SAT}-solver.
\newblock In Enrico Giunchiglia and Armando Tacchella, editors, {\em Theory and
  Applications of Satisfiability Testing, 6th International Conference, {SAT}
  2003. Santa Margherita Ligure, Italy, May 5-8, 2003 Selected Revised Papers},
  volume 2919 of {\em Lecture Notes in Computer Science}, pages 502--518.
  Springer, 2003.
\newblock \href {https://doi.org/10.1007/978-3-540-24605-3\_37}
  {\path{doi:10.1007/978-3-540-24605-3\_37}}.

\bibitem{ramsey-github}
Thibault Gauthier and Chad~E. Brown.
\newblock Software accompanying the paper "{A} formal proof of {R}(4,5)=25".
\newblock \url{https://github.com/barakeel/ramsey}, 2024.
\newblock Accessed on May 26, 2024.

\bibitem{gonthier08ams}
Georges Gonthier.
\newblock Formal proof the four-color theorem.
\newblock {\em Notices of the AMS}, 55(11):1382--1393, 2008.
\newblock URL: \url{http://www.ams.org/notices/200811/tx081101382p.pdf}.

\bibitem{DBLP:journals/corr/HalesABDHHKMMNNNOPRSTTTUVZ15}
Thomas~C. Hales, Mark Adams, Gertrud Bauer, Dat~Tat Dang, John Harrison,
  Truong~Le Hoang, Cezary Kaliszyk, Victor Magron, Sean McLaughlin, Thang~Tat
  Nguyen, Truong~Quang Nguyen, Tobias Nipkow, Steven Obua, Joseph Pleso,
  Jason~M. Rute, Alexey Solovyev, An~Hoai~Thi Ta, Trung~Nam Tran, Diep~Thi
  Trieu, Josef Urban, Ky~Khac Vu, and Roland Zumkeller.
\newblock A formal proof of the {K}epler conjecture.
\newblock {\em CoRR}, abs/1501.02155, 2015.
\newblock URL: \url{http://arxiv.org/abs/1501.02155}, \href
  {http://arxiv.org/abs/1501.02155} {\path{arXiv:1501.02155}}.

\bibitem{DBLP:journals/dcg/HalesF06}
Thomas~C. Hales and Samuel~P. Ferguson.
\newblock A formulation of the {K}epler conjecture.
\newblock {\em Discret. Comput. Geom.}, 36(1):21--69, 2006.
\newblock URL: \url{https://doi.org/10.1007/s00454-005-1211-1}, \href
  {https://doi.org/10.1007/S00454-005-1211-1}
  {\path{doi:10.1007/S00454-005-1211-1}}.

\bibitem{harrison09hollight}
John Harrison.
\newblock {HOL Light}: An overview.
\newblock In Stefan Berghofer, Tobias Nipkow, Christian Urban, and Makarius
  Wenzel, editors, {\em Conference on Theorem Proving in Higher Order Logics
  (TPHOLs)}, volume 5674 of {\em LNCS}, pages 60--66. Springer, 2009.
\newblock URL: \url{http://dx.doi.org/10.1007/978-3-642-03359-9_4}.

\bibitem{DBLP:conf/sat/HeuleKM16}
Marijn J.~H. Heule, Oliver Kullmann, and Victor~W. Marek.
\newblock Solving and verifying the boolean pythagorean triples problem via
  cube-and-conquer.
\newblock In Nadia Creignou and Daniel~Le Berre, editors, {\em Theory and
  Applications of Satisfiability Testing - {SAT} 2016 - 19th International
  Conference, Bordeaux, France, July 5-8, 2016, Proceedings}, volume 9710 of
  {\em Lecture Notes in Computer Science}, pages 228--245. Springer, 2016.
\newblock \href {https://doi.org/10.1007/978-3-319-40970-2\_15}
  {\path{doi:10.1007/978-3-319-40970-2\_15}}.

\bibitem{kalbfleisch1965construction}
JG~Kalbfleisch.
\newblock Construction of special edge-chromatic graphs.
\newblock {\em Canadian Mathematical Bulletin}, 8(5):575--584, 1965.

\bibitem{DBLP:journals/jsc/McKayP14}
Brendan~D. McKay and Adolfo Piperno.
\newblock Practical graph isomorphism, {II}.
\newblock {\em J. Symb. Comput.}, 60:94--112, 2014.
\newblock URL: \url{https://doi.org/10.1016/j.jsc.2013.09.003}, \href
  {https://doi.org/10.1016/J.JSC.2013.09.003}
  {\path{doi:10.1016/J.JSC.2013.09.003}}.

\bibitem{DBLP:journals/jgt/McKayR95}
Brendan~D. McKay and Stanislaw~P. Radziszowski.
\newblock \emph{R}(4, 5) = 25.
\newblock {\em J. Graph Theory}, 19(3):309--322, 1995.
\newblock URL: \url{https://doi.org/10.1002/jgt.3190190304}, \href
  {https://doi.org/10.1002/JGT.3190190304} {\path{doi:10.1002/JGT.3190190304}}.

\bibitem{hol4}
Konrad Slind and Michael Norrish.
\newblock A brief overview of {HOL4}.
\newblock In Otmane~A\"{\i}t Mohamed, C{\'e}sar~A. Mu{\~n}oz, and Sofi{\`e}ne
  Tahar, editors, {\em Conference on Theorem Proving in Higher Order Logics
  (TPHOLs)}, volume 5170 of {\em LNCS}, pages 28--32. Springer, 2008.
\newblock URL: \url{http://dx.doi.org/10.1007/978-3-540-71067-7_6}.

\bibitem{DBLP:journals/japll/WeberA09}
Tjark Weber and Hasan Amjad.
\newblock Efficiently checking propositional refutations in {HOL} theorem
  provers.
\newblock {\em J. Appl. Log.}, 7(1):26--40, 2009.
\newblock URL: \url{https://doi.org/10.1016/j.jal.2007.07.003}, \href
  {https://doi.org/10.1016/J.JAL.2007.07.003}
  {\path{doi:10.1016/J.JAL.2007.07.003}}.

\bibitem{DBLP:conf/sat/WetzlerHH14}
Nathan Wetzler, Marijn Heule, and Warren A.~Hunt Jr.
\newblock {DRAT}-trim: Efficient checking and trimming using expressive clausal
  proofs.
\newblock In Carsten Sinz and Uwe Egly, editors, {\em Theory and Applications
  of Satisfiability Testing - {SAT} 2014 - 17th International Conference, Held
  as Part of the Vienna Summer of Logic, {VSL} 2014, Vienna, Austria, July
  14-17, 2014. Proceedings}, volume 8561 of {\em Lecture Notes in Computer
  Science}, pages 422--429. Springer, 2014.
\newblock \href {https://doi.org/10.1007/978-3-319-09284-3\_31}
  {\path{doi:10.1007/978-3-319-09284-3\_31}}.

\end{thebibliography}
\end{document}